\documentclass[wcp]{jmlr}

\usepackage{longtable}
\usepackage{booktabs}
\usepackage{enumitem}

\usepackage{lineno}

\makeatletter
\let\Ginclude@graphics\@org@Ginclude@graphics
\makeatother

\jmlryear{2026}

\newcommand{\bits}{\{0,1\}}
\newcommand{\EOS}{\mathtt{[EOS]}}
\newcommand{\blank}{\sqcup}
\newcommand{\leftend}{\vdash}
\newcommand{\ok}{\mathtt{ok}}
\newcommand{\readcmd}{\mathtt{read}}
\newcommand{\leftcmd}{\mathtt{left}}
\newcommand{\rightcmd}{\mathtt{right}}
\newcommand{\staycmd}{\mathtt{stay}}
\newcommand{\writecmd}[1]{\mathtt{write}_{#1}}

\makeatletter
\@ifundefined{theorem}{\newtheorem{theorem}{Theorem}}{}
\@ifundefined{lemma}{\newtheorem{lemma}{Lemma}}{}
\@ifundefined{proposition}{\newtheorem{proposition}{Proposition}}{}
\@ifundefined{corollary}{\newtheorem{corollary}{Corollary}}{}
\@ifundefined{definition}{\newtheorem{definition}{Definition}}{}
\@ifundefined{remark}{\newtheorem{remark}{Remark}}{}
\@ifundefined{conjecture}{}{}
\makeatother

\title[When Does Tool Use Increase the Expressive Power of
Finite-Precision Recurrent Models?]%
{When Does Tool Use Increase the Expressive Power of
Finite-Precision Recurrent Models?}

\author{
\Name{Nikola Zubi\'c} \Email{zubic@ifi.uzh.ch}\\
\addr Robotics and Perception Group, University of Zurich
\AND
\Name{Qian Li} \Email{liqian.ict@gmail.com}\\
\addr Shenzhen International Center for Industrial and Applied Mathematics,\\
Shenzhen Research Institute of Big Data
\AND
\Name{Yuyi Wang} \Email{yuyiwang920@gmail.com}\\
\addr Tengen Intelligence Institute, CRRC Zhuzhou Institute
\AND
\Name{Davide Scaramuzza} \Email{sdavide@ifi.uzh.ch}\\
\addr Robotics and Perception Group, University of Zurich
}

\begin{document}

\makeatletter
\let \@jmlrpages \@empty
\makeatother

\maketitle

\begin{abstract}
Modern sequence models are increasingly deployed as agents that
interleave token generation with calls to external tools. We give an
exact, architecture-level account of when such tool access increases
computational expressivity. We model any fixed finite-precision
recurrent sequence model, including finite-precision state-space
models (SSMs) with $B$ bits of internal state, as a deterministic
finite-state controller interacting with an oracle through a finite
command/observation interface. Our results form a sharp dichotomy.
First, tools that are themselves finite-state add essentially nothing:
a product-state simulation internalizes any finite-state
bounded-interface oracle with finite memory set $M$ at a cost of only
$\log_2|M|+O(1)$ additional bits, so the augmented system remains
finite-state. Second, a single minimal infinite-state tool (a tape
supporting only local
$\mathtt{read}$/$\mathtt{write}$/$\mathtt{move}$ commands) makes the
system Turing complete: for every single-tape Turing machine with
state set $Q$ and tape alphabet $\Gamma$, a controller with
$O(\log|Q|+\log|\Gamma|)$ bits of internal memory simulates it, and
we exhibit a concrete exponential separation ($\mathrm{EQ}_n$ requires
$2^n$ states without tools but a single constant-size controller with
the tape tool). Third, we show that this construction is realized
\emph{exactly} by a natural one-layer finite-precision
\emph{selective} affine SSM controller with binary one-hot hidden
states, $\{0,1\}$ transition matrices, and zero biases. Selectivity is
essential to the construction. In the supplementary material, we make
all constants explicit, prove a logarithmic oracle-assisted universal
simulation ($O(\log B)$ recurrent bits suffice to simulate any
$B$-state Turing machine), and prove a matching impossibility result:
without external memory, directly realizing an arbitrary $B$-state
transition map in one exact affine recurrent update requires dimension
exactly $B-1$, for any number of layers. Together, these results give a
precise resource-accounting picture of tool-augmented recurrent
computation.
\end{abstract}

\begin{keywords}
expressivity; state-space models; tool use; oracle machines; Turing
completeness; finite precision; recurrent models; computational
complexity
\end{keywords}

\section{Introduction}\label{sec:intro}

Sequence models are no longer used only as sequence-to-sequence maps:
they are deployed as \emph{agents} that interleave output generation
with calls to external tools such as calculators, code interpreters,
retrieval systems, and scratchpads
\citep{schick2023toolformer,yao2023react,nye2021scratchpad}. This
raises a basic theoretical question:
\begin{center}
\emph{Which tools increase the computational expressivity of a fixed
finite-precision sequence model, by how much, and at what internal
memory cost?}
\end{center}

The starting point is that any fixed finite-precision recurrent model
is, from the standpoint of expressivity, a finite-state machine: a
model with $B$ bits of internal state has at most $2^B$
configurations, and hence, without external interaction, recognizes
only regular languages
\citep{weiss2018practical,merrill2019automata,zubic2024limits}. This
covers, in particular, finite-precision structured state-space models
(SSMs) \citep{gu2022s4,gu2023mamba}, whose formal-language limitations
have been analyzed in detail
\citep{merrill2024illusion,sarrof2024expressive,zubic2024limits,zubic2026ssm}.
Chain-of-thought decoding changes this picture only partially: it can
add power, but in a way that is tightly constrained by the number of
generated steps
\citep{merrill2024cot,zubic2026ssm}. Tool use is a different
augmentation mechanism, and, as we show, its effect on expressivity is
governed entirely by the \emph{interface and state space of the tool},
not by the controller.

\paragraph{Our model.}
We abstract the sequence model as a deterministic finite-state
controller that first reads its input and then, in a generation phase,
repeatedly emits output tokens, issues tool commands from a finite
command alphabet, receives observations from a finite observation
alphabet, or halts (Section~\ref{sec:finite-state-controller}). Any
fixed finite-precision recurrent model with $B$ bits of state induces
such a controller with at most $2^B$ states, so all our lower bounds
apply to such models, and all our upper bounds are constructive.

\paragraph{Contributions.}
Our results form a sharp dichotomy, summarized in
Table~\ref{tab:summary}.
\begin{enumerate}[leftmargin=2em]
\item \textbf{Finite-state tools add (almost) nothing.} We formalize
finite-state bounded-interface oracles and prove a product-state
simulation theorem (Section~\ref{sec:finite-state-tools}): any such
tool can be internalized by the controller at a cost of
$\log_2|M|+O(1)$ additional bits, where $M$ is the oracle memory. The
finite-state \emph{interface} hypothesis is essential. We show that
dropping it makes even a one-bit oracle arbitrarily powerful.
\item \textbf{One minimal unbounded tool yields Turing completeness.}
We define a tape oracle supporting only local
$\mathtt{read}$/$\mathtt{write}$/$\mathtt{move}$ commands
(Section~\ref{sec:tape-oracle}) and prove that a controller with
$O(\log|Q|+\log|\Gamma|)$ bits of internal memory (constant in the
input length) simulates any single-tape Turing machine
(Section~\ref{sec:interactive-tape}). We complement the asymptotic
separation with a concrete finite one: solving $\mathrm{EQ}_n$
requires $2^n$ controller states without tools, while a single
constant-size controller with tape access solves it for all $n$
(Section~\ref{sec:eq-lower-bound}).
\item \textbf{Exact realization by a one-layer selective SSM.} We
prove that every observation-driven finite-state controller is
realized \emph{exactly} by a one-layer
finite-precision selective affine SSM with one-hot binary hidden
states, $\{0,1\}$ transition matrices, and zero biases
(Section~\ref{sec:ssm-controller}). Instantiating this with the tape
simulation makes finite-precision selective SSM controllers Turing
complete with hidden dimension $O(|Q|\,|\Gamma|)$. The one-hot
encoding makes this dimension linear in $|Q|$;
Theorem~\ref{thm:log-simulation} in the supplementary material shows
that this linear dependence is not necessary and improves it to
logarithmic (see Contribution~4).
\item \textbf{Exact resource accounting (supplementary).} In the
supplementary material we give a fully explicit three-symbol
realization with constant $C_0=9$
(Appendix~\ref{app:three-symbol}), and a \emph{logarithmic}
oracle-assisted universal simulation
(Theorem~\ref{thm:log-simulation} in
Appendix~\ref{app:log-simulation}): $O(\log B)$ recurrent bits
suffice to simulate any $B$-state Turing machine, by hard-coding the
machine's description in the fixed readout and copying it onto the
tape. Since $B=|Q|$ and the tape alphabet is fixed there, this can be
viewed as a strict improvement of Contribution~3: it reduces the
required recurrent dimension of the SSM controller from $O(|Q|)$
one-hot coordinates to $O(\log|Q|)$ recurrent bits, under resource
conventions made explicit in the appendix (the tape is free external
memory, and fixed parameters and the readout table are not charged to
the recurrent-memory budget). We complement this with a
matching impossibility theorem: without external memory, exactly
realizing an arbitrary transition map $f:[B]\times\{0,1\}\to[B]$ in
one affine recurrent update requires total recurrent dimension exactly
$B-1$, for any number of triangular selective-affine layers. This
resolves the direct-realization question negatively and pinpoints
precisely which modeling conventions the logarithmic result depends
on.
\end{enumerate}

\begin{table}[htbp]
\centering
\caption{Summary of results. Controller memory is internal recurrent
memory; the tape oracle is external. $Q,\Gamma$: simulated Turing
machine states and tape alphabet; $M$: oracle memory; $B$: number of
simulated states.}\label{tab:summary}
\begin{tabular}{@{}llll@{}}
\toprule
Tool & Controller memory & Power & Where \\
\midrule
none & $b$ bits & regular languages & Sec.~\ref{sec:interactive-tape} \\
finite-state oracle & $+\log_2|M|+O(1)$ bits & still regular & Sec.~\ref{sec:finite-state-tools} \\
tape oracle & $O(\log|Q|+\log|\Gamma|)$ bits & Turing complete & Sec.~\ref{sec:interactive-tape} \\
tape oracle (1-layer sel.\ SSM) & $d=O(|Q|\,|\Gamma|)$, binary & Turing complete & Sec.~\ref{sec:ssm-controller} \\
none, $\mathrm{EQ}_n$ & $\geq 2^n$ states & --- & Sec.~\ref{sec:eq-lower-bound} \\
tape oracle, universal & $O(\log B)$ bits & simulates $\mathcal T_B$ & App.~\ref{app:log-simulation} \\
no external memory, direct & exactly $B-1$ dims & one-step $f$ & App.~\ref{app:log-simulation} \\
\bottomrule
\end{tabular}
\end{table}

\paragraph{Scope.}
All results are exact constructions or unconditional lower bounds. No
training, approximation, or probabilistic arguments are involved. All
alphabets and interfaces are finite, and precision is fixed
throughout.

\section{Related Work}\label{sec:related}

\paragraph{Expressivity of finite-precision recurrent models and SSMs.}
With unbounded precision and time, recurrent networks are Turing
complete \citep{siegelmann1995computational}, but under realistic
finite precision they collapse to finite-state machines
\citep{weiss2018practical,merrill2019automata}. For SSMs
\citep{gu2022s4,gu2023mamba}, circuit-complexity upper bounds place
fixed-depth models in $\mathsf{TC}^0$ and expose state-tracking
limitations \citep{merrill2024illusion}, with refined formal-language
characterizations by \citet{sarrof2024expressive} and expressivity
gains from richer recurrences \citep{grazzi2025unlocking}.
\citet{zubic2024limits} prove that finite-precision SSMs recognize
only regular languages and establish function-composition lower
bounds for one-layer SSMs, supported empirically on compositional
benchmarks. Related compositional limitations of Transformers appear
in \citet{dziri2023faith} and \citet{peng2024limitations}.
\citet{zubic2026ssm} analyze multi-layer SSMs, showing an inherent gap
to streaming models, that offline chain-of-thought does not
fundamentally increase expressiveness while online chain-of-thought
does, and that width and precision are non-interchangeable in the
base model. Our work is complementary: we fix the controller to be
finite-state (subsuming any fixed finite-precision model, including
those studied in the works above) and vary the \emph{tool interface},
obtaining an exact dichotomy between finite-state and tape-like tools.

\paragraph{Augmented decoding and Turing completeness.}
Turing-completeness results for Transformers typically require
unbounded precision or growing intermediate computation
\citep{perez2021attention}, and chain-of-thought provably extends
expressive power as a function of the number of generated steps
\citep{merrill2024cot,wei2022cot}. The neural Chomsky-hierarchy study
of \citet{deletang2023chomsky} demonstrates empirically that external
memory structure (stack, tape) determines which language classes are
learnable. Our tape-oracle theorem gives the corresponding exact,
constructive statement at the level of finite-state controllers, with
explicit internal-memory bounds, and our appendix results quantify
precisely how much recurrent memory is saved by offloading the
program to the tape (from $\Theta(B)$ to $O(\log B)$, using small
universal machines \citep{NearyWoods2009}).

\paragraph{Tool-augmented language models.}
Practical systems interleave generation with tool calls
\citep{schick2023toolformer,yao2023react} or scratchpad writes
\citep{nye2021scratchpad}. These works are empirical; we provide a
minimal formal semantics for the command/observation loop (tags such
as $\mathtt{[TOOL]}$/$\mathtt{[OBS]}$ being a finite-state encoding
convention) and characterize exactly when the loop increases
expressivity.

\section{A finite-state controller model}\label{sec:finite-state-controller}

We work with finite alphabets throughout. Multi-token tags such as
$
\mathtt{[TOOL]}, \mathtt{[\backslash TOOL]},
\mathtt{[OBS]}, \mathtt{[\backslash OBS]}
$
are only a finite-state encoding convention. For expressivity questions, they can be replaced by a finite command alphabet and a finite observation alphabet without changing the asymptotic conclusions. If desired, the literal tagged transcript can be recovered by expanding the finite control by a constant factor to print the corresponding fixed encodings.

Let \(A\) be an input alphabet and let \(\dashv\notin A\) be a right endmarker.

\begin{definition}[Finite-state controller]
A deterministic finite-state controller is a finite-state transducer
$
C=(S,s_0,\delta_{in},\alpha,\delta_{out},\delta_{obs}),
$
where:
\begin{itemize}[leftmargin=2em]
\item \(S\) is a finite state set;
\item \(s_0\in S\) is the initial state;
\item \(\delta_{in}:S\times (A\cup\{\dashv\})\to S\) updates the controller while reading the input;
\item after the input phase, \(\alpha\) determines the next action of the controller. The possible actions are:
\[
\mathrm{halt},\qquad \mathrm{out}(a)\ \text{for an output token }a,\qquad
\mathrm{cmd}(d)\ \text{for a tool command }d;
\]
\item \(\delta_{out}\) updates the controller after an output token is emitted;
\item \(\delta_{obs}\) updates the controller after a tool observation is received.
\end{itemize}
The controller first reads \(x\dashv\) from left to right. It then enters the generation phase. In the generation phase it repeatedly outputs tokens, issues tool commands, or halts.
\end{definition}

\paragraph{Example (a tool call and its tagged transcript).}
Let \(D=\{\mathtt{parity}\}\) be a one-command tool interface and let
\(R=\{\mathtt{even},\mathtt{odd}\}\) be its observation alphabet.  A
controller may have a post-input state \(s_{ask}\) with
$
    \alpha(s_{ask})=\mathrm{cmd}(\mathtt{parity}).
$
Thus the abstract action is the tool command \(\mathrm{cmd}(d)\), with
\(d=\mathtt{parity}\).  If one wants to display the same interaction as
a literal tagged transcript, this single abstract command can be
expanded into the fixed finite string
$
    \mathtt{[TOOL]}\;\mathtt{parity}\;\mathtt{[\backslash TOOL]}.
$
After the environment executes the command, it returns an observation
\(r\in R\).  For example, if \(r=\mathtt{odd}\), the literal transcript
could display the observation as
$
    \mathtt{[OBS]}\;\mathtt{odd}\;\mathtt{[\backslash OBS]}.
$
The controller then updates by
$
    s'=\delta_{obs}(s_{ask},\mathtt{odd}).
$
For instance, \(s'\) could be a state with
\(\alpha(s')=\mathrm{out}(\mathtt{F})\), after which
\(\delta_{out}\) moves the controller to a halting state.  The tags
\(\mathtt{[TOOL]}\), \(\mathtt{[\backslash TOOL]}\),
\(\mathtt{[OBS]}\), and \(\mathtt{[\backslash OBS]}\) are not additional
sources of computational power; they are only a fixed finite encoding
of the abstract command and observation alphabets.

\begin{remark}
This is an explicit finite-state version of the bounded-memory sequence controller. Any fixed finite-precision recurrent model, including a finite-precision state-space model with \(B\) bits of internal state, induces such a controller with at most \(2^B\) possible internal states.
\end{remark}

\section{Finite-state tools do not increase expressivity}\label{sec:finite-state-tools}

The finite-state tools are merely additional bounded-memory devices; this is correct only if the tool interface itself is finite-state. Merely requiring the oracle memory set \(\mathcal M\) to be finite is not enough: an arbitrary initialization map or an arbitrary command-to-observation map could encode nonregular or even undecidable behavior.
We therefore use the following definition.

\begin{definition}[Finite-state bounded-interface oracle]
A finite-state bounded-interface oracle is a tuple
$
\mathcal O=(M,m_0,\eta,D,R,\mathrm{Step}),
$
where:
\begin{itemize}[leftmargin=2em]
\item \(M\) is a finite oracle-memory set;
\item \(m_0\in M\) is the initial oracle state before reading the input;
\item \(\eta:M\times A\to M\) is a finite-state initialization transition;
\item \(D\) is a finite command set;
\item \(R\) is a finite observation set;
\item
$
\mathrm{Step}:M\times D\to M\times R
$
is a deterministic transition-and-observation map.

Operationally, when the current oracle memory is \(m\) and the
controller issues a command \(d\in D\), the value
$
    \mathrm{Step}(m,d)=(m',r)
$
means that the oracle memory becomes \(m'\) and the observation returned
to the controller is \(r\in R\).
\end{itemize}
Given input \(x=x_1\cdots x_n\in A^\ast\), the initial oracle memory is
$
\mathrm{Init}(x)=\eta^\ast(m_0,x),
$
where \(\eta^\ast\) is the usual extension of \(\eta\) to words.
\end{definition}

\paragraph{Example (a one-bit oracle).}
Take $A=\{0,1\}$ and let the oracle memory be $M=\{0,1\}$, with
$m_0=0$ and initialization transition $\eta(m,a)=m\oplus a$.  After
reading the input, the oracle state records the parity of the input.
Let the command set be $D=\{\mathtt{query},\mathtt{flip}\}$ and the
observation set be $R=\{0,1,\mathtt{ok}\}$.  Define
\[
    \mathrm{Step}(m,\mathtt{query})=(m,m),
    \qquad
    \mathrm{Step}(m,\mathtt{flip})=(1-m,\mathtt{ok}).
\]
This is a finite-state bounded-interface oracle: both its memory and
its command/observation interface are finite.  The product-state
simulation below simply stores the pair consisting of the controller
state and this one-bit oracle state.

\begin{proposition}[Product-state simulation]
Let \(C\) be a deterministic finite-state controller interacting with a finite-state bounded-interface oracle \(\mathcal O\). Then there is a deterministic finite-state controller \(\widetilde C\), without tool access, that produces exactly the same scored output stream on every input.
Moreover, \(\widetilde C\) may be chosen with state space contained in
$
S\times M
$
up to a constant-factor enlargement for fixed command and observation encodings. Thus finite-state bounded-interface tools increase memory by at most
$(
\log_2 |M|+O(1)
)$
bits.
\end{proposition}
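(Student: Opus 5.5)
The plan is to build $\widetilde C$ directly on the product $S\times M$ and prove by induction on the number of steps that it follows the joint run of $C$ and $\mathcal O$ exactly. The transitions of $\widetilde C=(\widetilde S,\tilde s_0,\tilde\delta_{in},\tilde\alpha,\tilde\delta_{out},\tilde\delta_{obs})$ are defined as follows.
\begin{itemize}[leftmargin=2em]
\item The initial state is $\tilde s_0=(s_0,m_0)$.
\item While reading the input, for $a\in A$ set $\tilde\delta_{in}((s,m),a)=(\delta_{in}(s,a),\eta(m,a))$. On the endmarker, set $\tilde\delta_{in}((s,m),\dashv)=(\delta_{in}(s,\dashv),m)$, since $\eta$ is only defined on $A$.
\end{itemize}
After reading $x\dashv$, $\widetilde C$ is in $(s_x,\mathrm{Init}(x))$, where $s_x$ is the post-input state of $C$. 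This follows by induction on $|x|$ from the definition of $\eta^\ast$.

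For the generation phase, at a state $(s,m)$ I would split into three cases according to $\alpha(s)$.
\begin{itemize}[leftmargin=2em]
\item If $\alpha(s)=\mathrm{halt}$, set $\tilde\alpha(s,m)=\mathrm{halt}$.
\item If $\alpha(s)=\mathrm{out}(a)$, set $\tilde\alpha(s,m)=\mathrm{out}(a)$ and $\tilde\delta_{out}((s,m),a)=(\delta_{out}(s,a),m)$.
\item If $\alpha(s)=\mathrm{cmd}(d)$, compute $(m',r)=\mathrm{Step}(m,d)$ and let $\widetilde C$ move silently to $(\delta_{obs}(s,r),m')$.
\end{itemize}

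The main obstacle is the last case, because the controller model has no silent action: every generation step must halt, output, or consume an observation. I would handle it in one of two ways. The first option is to collapse chains of command steps. Because $S\times M$ is finite and everything is deterministic, one can follow the chain of consecutive command actions from $(s,m)$ until it reaches a halting or output action, and use that as the action of $(s,m)$. If the chain cycles forever, the joint system emits nothing more and never halts, so $\widetilde C$ should also diverge silently; I would model that with a dedicated non-halting sink, or declare it equivalent to the scored stream under whatever convention the paper uses for non-halting runs. The second option, closer to the ``constant-factor enlargement for fixed command and observation encodings'' in the statement, is to keep the literal transcript. Here $\widetilde C$ prints the fixed tagged encodings of $d$ and of $r$ as output tokens, using a bounded number of auxiliary phase states. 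These states range over $S\times M\times P$ with $|P|=O(|D|+|R|)$ depending only on the fixed interface encodings, which gives $\log_2|S|+\log_2|M|+O(1)$ bits. Which option to use depends on how the ``scored output stream'' is defined, and I would choose whichever matches it.

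Finally, an induction on generation steps shows that the state of $\widetilde C$ equals the pair (controller state, oracle memory) of the joint system at the corresponding step, so the two emit identical streams and halt at the same time. The memory bound follows from $\log_2(|S|\,|M|\,c)=\log_2|S|+\log_2|M|+O(1)$.
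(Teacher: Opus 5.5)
Your proposal is correct and follows the paper's proof essentially step for step: the same product state $(s,m)$, the same input-phase update with $\dashv$ leaving $m$ fixed, the same three-way case split in which $\mathrm{Step}(m,d)$ is evaluated internally, the same induction on steps, and the same finite tag buffer that accounts for the constant-factor enlargement. The one difference is that the paper lets $\widetilde C$ make the command-case transition as a silent internal move without comment, so it settles your concern about the missing silent action by convention. If you instead insist on the strict no-silent-action model, only your tagged-encoding option handles silently diverging runs cleanly; the chain-collapsing option does not, because a no-tool ``sink'' must still halt or emit.
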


\begin{proof}
Let
$
C=(S,s_0,\delta_{in},\alpha,\delta_{out},\delta_{obs})
$
and let
$
\mathcal O=(M,m_0,\eta,D,R,\mathrm{Step}).
$
Construct a no-tool controller \(\widetilde C\) whose state records a pair
$
(s,m)\in S\times M.
$

During the input phase, when the next input symbol is \(a\in A\), define
$
(s,m)\mapsto \bigl(\delta_{in}(s,a),\eta(m,a)\bigr).
$
On the endmarker \(\dashv\), update the controller component by \(\delta_{in}\) and leave the oracle component fixed.

During the generation phase, suppose \(\widetilde C\) is in state \((s,m)\).
If \(\alpha(s)=\mathrm{halt}\), then \(\widetilde C\) halts.
If \(\alpha(s)=\mathrm{out}(a)\), then \(\widetilde C\) emits the same output token \(a\) and moves to
$
(\delta_{out}(s,a),m).
$
If \(\alpha(s)=\mathrm{cmd}(d)\), compute
$
(m',r)=\mathrm{Step}(m,d).
$
The original interaction would update the oracle memory from \(m\) to \(m'\), return the observation \(r\), and update the controller state to \(\delta_{obs}(s,r)\). Therefore \(\widetilde C\) moves to
$
(\delta_{obs}(s,r),m').
$
No tool call is made externally, its effect has been simulated internally.

By induction on the number of generation steps, the simulated pair \((s,m)\) is exactly the controller state and oracle state that would occur in the true interaction. Hence every output token emitted by \(\widetilde C\) is exactly the corresponding output token emitted by \(C\), and conversely. Thus the scored output streams agree on every input.
If one insists on reproducing the literal tagged transcript rather than only the scored output stream, one adds a finite emission buffer for the fixed encodings of \(d\in D\) and \(r\in R\). Since \(D\) and \(R\) are finite, this changes the state space only by a constant factor.
\end{proof}

\begin{remark}
The finite-state initialization hypothesis is essential. If \(\mathrm{Init}:A^\ast\to M\) is allowed to be arbitrary, then even a one-bit oracle state can encode membership in an arbitrary language. Similarly, if \(\mathrm{Step}:M\times \Sigma^\ast\to M\times \Sigma^\ast\) is arbitrary as a function of the whole command string, then the oracle can perform unbounded computation despite having finite memory. The product-state simulation theorem is therefore valid only for genuinely finite-state tool interfaces.
\end{remark}

\section{An unbounded tape-memory tool}\label{sec:tape-oracle}

We now define a simple infinite-state tool. It performs only local tape operations, but its memory is unbounded.
Let \(\Gamma\) be a finite tape alphabet containing a blank symbol \(\blank\), a left-end marker \(\leftend\), and an output endmarker \(\EOS\). Let \(A\subseteq \Gamma\setminus\{\blank,\leftend,\EOS\}\) be the input alphabet.

A tape configuration is a pair \((T,h)\), where
$
T:\mathbb N\to \Gamma
$
has finite support away from blanks, satisfies \(T(0)=\leftend\), and \(h\in\mathbb N\) is the head position.

\begin{definition}[Tape oracle]
The tape oracle \(\mathcal O_{tape}\) has memory space consisting of all tape configurations \((T,h)\). On input \(x=x_1\cdots x_n\in A^\ast\), it initializes
\[
T(0)=\leftend,\qquad T(i)=x_i\ \text{for }1\le i\le n,\qquad
T(i)=\blank\ \text{for }i>n,
\]
and sets \(h=1\).

It supports the finite command set
\[
D=\{\mathtt{read},\mathtt{move\_left},\mathtt{move\_right}\}
   \cup \{\mathtt{write}(\gamma):\gamma\in\Gamma\}.
\]
The observations are symbols of \(\Gamma\), returned by \(\mathtt{read}\), together with a dummy acknowledgement symbol \(\mathtt{ok}\) for write and move commands.

The transition rules are:
\begin{itemize}[leftmargin=2em]
\item \(\mathtt{read}\) returns \(T(h)\) and does not change the tape;
\item \(\mathtt{write}(\gamma)\) writes \(\gamma\) at the current head position, except that the left-end marker at cell \(0\) is preserved;
\item \(\mathtt{move\_right}\) replaces \(h\) by \(h+1\);
\item \(\mathtt{move\_left}\) replaces \(h\) by \(\max\{h-1,0\}\).
\end{itemize}
\end{definition}

\paragraph{Example (local tape operations).}
On input $x=01$, the initial tape begins
$
    \leftend\;0\;1\;\blank\;\blank\;\cdots,
$
and the head is on the first input symbol.  The command
$\mathtt{read}$ returns $0$.  After $\mathtt{move\_right}$, the head is
on the symbol $1$; issuing $\mathtt{write}(0)$ changes the tape prefix
to $\leftend\,0\,0$.  Moving right again and writing $1$ uses the next
previously blank cell.  Although every command is local and the
command alphabet is finite, an arbitrarily long computation can visit
and modify arbitrarily many tape cells, which is why the oracle has an
infinite state space.

\section{Interactive tape access yields Turing completeness}\label{sec:interactive-tape}

We use the following standard output convention. A deterministic single-tape Turing machine is in output-normal form if, whenever it halts, its tape contains
"$
\leftend\, y\, \EOS
$"
starting at cell \(0\), the head is positioned at the first symbol of \(y\EOS\), and \(y\in A^\ast\) is the output. Every partial computable string function has a deterministic single-tape machine in this normal form: after computing its intended output, the machine can rewrite the tape into the required form and move its head to the first output cell.

\begin{theorem}[Interactive tape tool-use is Turing complete]\label{thm:tape-turing}
Let \(T\) be a deterministic single-tape Turing machine in output-normal form, with finite state set \(Q\) and tape alphabet \(\Gamma\). There exists a deterministic finite-state controller \(C_T\) such that, with interactive access to \(\mathcal O_{tape}\), the controller computes the same partial function as \(T\).
More precisely, for every input \(x\in A^\ast\), the controller halts and outputs \(T(x)\EOS\) if and only if \(T\) halts on \(x\). The number of controller states is \(O(|Q||\Gamma|)\), equivalently
$
O(\log |Q|+\log |\Gamma|)
$
bits of internal memory. If \(\Gamma\) is fixed, this is \(O(\log |Q|)\) bits and is constant with respect to the input length.
\end{theorem}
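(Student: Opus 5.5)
The plan is a direct step-by-step simulation. The controller keeps the current Turing machine state $q\in Q$ in its finite control and uses the tape oracle as the machine's tape. The oracle's initial configuration (left-end marker at cell $0$, input in cells $1,\dots,n$, head at cell $1$) will be taken as the standard start configuration of $T$. If $T$ starts elsewhere, we first prepend a constant number of controller states that move the head there. During the input phase the controller ignores the input, since $\delta_{in}$ can map every symbol to the same state; the input is already on the oracle's tape. On the endmarker $\dashv$ it enters the generation-phase state $\mathrm{Read}(q_0)$.

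One step of $T$ is simulated by a short macro. In state $\mathrm{Read}(q)$ with $q$ non-halting, the controller sets $\alpha=\mathrm{cmd}(\mathtt{read})$ and receives $\gamma\in\Gamma$. Let $\delta_T(q,\gamma)=(q',\gamma',\mathrm{dir})$. The controller then moves to $\mathrm{Write}(q',\gamma',\mathrm{dir})$, which issues $\mathtt{write}(\gamma')$ and, on $\mathtt{ok}$, goes to $\mathrm{Move}(q',\mathrm{dir})$. That state issues $\mathtt{move\_left}$ or $\mathtt{move\_right}$, or nothing in the case of a stay move, and then returns to $\mathrm{Read}(q')$. I can cut the state count by storing only $(q',\mathrm{dir})$ after the write, but even $|Q|\cdot|\Gamma|\cdot 3$ states gives $O(|Q||\Gamma|)$.

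Two conventions of the oracle need checking against $T$. First, $\mathtt{move\_left}$ saturates at cell $0$ and writes to cell $0$ preserve $\leftend$. A standard machine with a left-end marker never moves left of $\leftend$ and rewrites $\leftend$ only as itself, so these conventions agree with $T$; I will state this as a normalization of $T$. Second, $T$ may write $\leftend$ or $\EOS$ elsewhere on the tape. The oracle allows this because both symbols lie in $\Gamma$.

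An induction on the number of steps gives the key invariant: after $k$ complete macro cycles, the oracle configuration $(T,h)$ equals $T$'s tape and head after $k$ steps, and the controller is in $\mathrm{Read}(q_k)$. Hence the controller reaches a halting-state macro if and only if $T$ halts.

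In the halting case I append an output routine. From $\mathrm{Read}(q_{\mathrm{halt}})$ the controller loops in a state $\mathrm{Emit}$: it issues $\mathtt{read}$ and receives $\gamma$. If $\gamma\in A$, it executes $\mathrm{out}(\gamma)$ and $\mathtt{move\_right}$, then returns to $\mathrm{Emit}$. If $\gamma=\EOS$, it executes $\mathrm{out}(\EOS)$ and halts. Output-normal form guarantees that the head starts at the first symbol of $y\EOS$ and that $y\in A^\ast$, so exactly $T(x)\EOS$ is emitted. This routine needs only $O(|A|)\subseteq O(|\Gamma|)$ extra states. If $T$ does not halt, the controller loops forever in the simulation macros, never reaches $\mathrm{Emit}$, and so never halts or emits $\EOS$. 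This proves the "if and only if".

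Counting states gives $O(|Q||\Gamma|)+O(|\Gamma|)=O(|Q||\Gamma|)$, which is $O(\log|Q|+\log|\Gamma|)$ bits, independent of the input length. The main obstacle is not the simulation itself but making the conventions match exactly: the saturating left move, the protected left-end marker, the head starting at cell $1$, and the possibility of a stay move. I will handle all of these by a preliminary normalization of $T$ that changes $|Q|$ and $|\Gamma|$ by at most constant factors, and I will argue explicitly that this normalization is standard.
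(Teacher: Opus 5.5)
Your proposal is correct and follows essentially the same route as the paper. Both keep the simulated state in finite control and run one read/write/move macro per simulated step. Both prove by induction that the oracle tape equals $T$'s tape, and both finish with a read/emit/move-right print loop justified by output-normal form. Your explicit handling of the input phase and of the oracle's conventions (saturating left move, protected $\leftend$, head starting at cell $1$) spells out what the paper leaves implicit when it says the invariant holds at initialization.
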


\begin{proof}
Let the transition function of \(T\) be
$
\delta_T:Q_{nh}\times \Gamma
\to Q\times \Gamma\times \{L,R,S\},
$
where \(Q_{nh}\subseteq Q\) is the set of nonhalting states.
The controller \(C_T\) stores:
\begin{itemize}[leftmargin=2em]
\item the current simulated machine state \(q\in Q\);
\item a mode flag indicating whether it is about to read, write, move, or print;
\item when needed, one tape symbol \(\gamma\in\Gamma\) obtained from the most recent read.
\end{itemize}
This is a finite amount of information, with \(O(|Q||\Gamma|)\) possible states.
The simulation loop is as follows. Suppose the controller stores the simulated state \(q\in Q_{nh}\).
First, it issues the command \(\mathtt{read}\). If the observation is \(\gamma\in\Gamma\), it computes in finite control
$
\delta_T(q,\gamma)=(q',\gamma',d),
$
where \(d\in\{L,R,S\}\). It then issues the command \(\mathtt{write}(\gamma')\). If \(d=L\), it issues \(\mathtt{move\_left}\); if \(d=R\), it issues \(\mathtt{move\_right}\); if \(d=S\), it issues no move command. It then updates its stored simulated state to \(q'\).

The invariant is that after each completed simulated transition, the oracle tape configuration is exactly the tape configuration of \(T\), and the controller stores exactly the current finite control state of \(T\). The invariant holds at initialization by the definition of \(\mathcal O_{tape}\). The preceding paragraph preserves it because the controller performs exactly the read, write, and head-move operations specified by \(\delta_T\).

If \(q'\) is nonhalting, the controller repeats the loop. If \(q'\) is halting, the output-normal-form convention guarantees that the oracle head is positioned at the first output symbol and that the tape contains a contiguous word \(y\EOS\). The controller enters print mode. In print mode it repeatedly issues \(\mathtt{read}\). If the observed symbol is \(\EOS\), it emits \(\EOS\) as an output token and halts. Otherwise it emits the observed symbol as an output token and then issues \(\mathtt{move\_right}\).

Thus the emitted output stream is exactly \(y\EOS\), where \(y=T(x)\). If \(T\) never halts, the simulation loop never reaches print mode, so the controller also never halts. Therefore the controller computes the same partial function as \(T\).
\end{proof}

\begin{corollary}[Strict expressivity increase]
There are decision and search problems solvable by a finite-state controller with interactive access to \(\mathcal O_{tape}\) that are not solvable by any finite-state controller without tool access.
\end{corollary}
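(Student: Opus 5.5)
The plan is to combine Theorem~\ref{thm:tape-turing} with the classical fact that finite-state controllers without tools compute only regular (finite-state) behaviour. First I will fix a concrete witness: the decision problem $L=\{w\#w : w\in\bits^\ast\}$, or the palindrome language $\{ww^R\}$. Its output convention is that the controller emits $\mathtt{1}\EOS$ on members and $\mathtt{0}\EOS$ otherwise. For a search problem, I would take the string function $w\mapsto w^R$ (or $w\mapsto w\,w$).

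\textbf{Positive direction.} For the positive direction, I take a deterministic single-tape Turing machine $T$ for $L$ in the usual zig-zag style. It marks the first unmarked symbol on the left of $\#$, crosses to compare it with the first unmarked symbol on the right, and repeats. When it finishes, it rewrites the tape to $\leftend\,b\,\EOS$ with the head on cell $1$, so that $T$ is in output-normal form. This machine always halts. Theorem~\ref{thm:tape-turing} then gives a finite-state controller $C_T$ which, with access to $\mathcal O_{tape}$, outputs $T(x)\EOS$ on every input. The search problem is handled the same way, using a machine that writes the reversed copy.

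\textbf{Negative direction.} Suppose a tool-free controller $C=(S,s_0,\delta_{in},\alpha,\delta_{out},\delta_{obs})$ solved $L$. Without tools, the entire generation phase is a deterministic function of the state reached after reading $x\dashv$: the actions are determined by $\alpha$ and $\delta_{out}$ alone, since $\delta_{obs}$ is never invoked. So the output on $x$ depends only on $\delta_{in}^\ast(s_0,x\dashv)$. The set of accepted inputs is therefore $\{x : \delta_{in}^\ast(s_0,x\dashv)\in F\}$ for the set $F$ of states whose generation phase emits $\mathtt{1}\EOS$, and this set is regular. A Myhill--Nerode argument finishes the proof. Take distinct prefixes $w\#$ and $w'\#$ with $|w|=|w'|=n$. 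The suffix $w$ separates them, so they must reach distinct states. This gives $|S|\ge 2^n$ for every $n$, which is a contradiction.

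For the search version, the argument is analogous. Given $x\dashv$, the entire output stream is a function of the post-input state. For $w\mapsto w^R$, the $2^n$ inputs of length $n$ have pairwise distinct outputs, so $|S|\ge 2^n$ for all $n$.

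\textbf{Main obstacle.} The only delicate step is the reduction ``no tools $\Rightarrow$ output determined by the post-input state''. It must be stated carefully: in the generation phase without observations, the transitions depend only on the current state and the controller's own emitted token. The rest is routine, and the explicit $2^n$ counting also anticipates the $\mathrm{EQ}_n$ bound of Section~\ref{sec:eq-lower-bound}.
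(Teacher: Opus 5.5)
Your proposal is correct and follows essentially the paper's route. It uses the same witness $L_{eq}=\{w\#w : w\in\bits^\ast\}$, Theorem~\ref{thm:tape-turing} for the positive direction, and the observation that without tools the entire generation phase is a function of the post-input state. The differences are minor: you replace the paper's citation of the non-regularity of $L_{eq}$ with an explicit fooling-set count (the argument the paper gives for $\mathrm{EQ}_n$), and you add an explicit search witness via the finite-range observation, which the paper's proof leaves implicit.
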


\begin{proof}
By the theorem, interactive access to \(\mathcal O_{tape}\) allows a finite-state controller to simulate any Turing machine, and hence to compute any partial computable string function.

Without tools, a controller with a fixed finite state set recognizes only regular languages under the usual decision convention: after reading \(x\dashv\), its final behavior is determined by one of finitely many states. Therefore the accepted language is a regular language.

The language
$
L_{eq}=\{u\#u:u\in \bits^\ast\}
$
is not regular. Hence, it is not decidable by any no-tool finite-state controller, but it is decidable by a finite-state controller with interactive tape access, since it is Turing-decidable.
\end{proof}

\section{A concrete exponential lower bound for equality}\label{sec:eq-lower-bound}

We now give a finite-length version of the same separation.

\begin{theorem}[Equality requires exponentially many internal states]
For each \(n\in\mathbb N\), define the promise problem \(\mathrm{EQ}_n\) on inputs
\[
u\#v,\qquad u,v\in \bits^n,
\]
by requiring output \(\mathtt T\) if \(u=v\) and output \(\mathtt F\) otherwise.

\begin{enumerate}[label=(\roman*), leftmargin=2em]
\item Any deterministic no-tool finite-state controller that solves \(\mathrm{EQ}_n\) on all promised inputs has at least \(2^n\) internal states.
\item There is a single deterministic finite-state controller which, with interactive access to \(\mathcal O_{tape}\), solves \(\mathrm{EQ}_n\) for every \(n\).
\end{enumerate}
\end{theorem}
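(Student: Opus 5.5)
For part (i) I would use a fooling-set (Myhill--Nerode style) argument on the state reached right after reading the prefix $u\#$. Fix a deterministic no-tool controller $C$ that solves $\mathrm{EQ}_n$ on all promised inputs. For each $u\in\bits^n$, let $s_u=\delta_{in}^\ast(s_0,u\#)$ be the state after reading $u\#$. The claim is that the map $u\mapsto s_u$ is injective, which gives $|S|\ge 2^n$.

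Suppose instead that $s_u=s_{u'}$ for some $u\neq u'$. Since $C$ is deterministic and has no tool access, the rest of its run on $u\#v\dashv$ depends only on the current state and on the remaining input $v\dashv$. This covers both the remaining input-phase updates and the entire generation phase, which is driven only by $\alpha$, $\delta_{out}$ and the state reached after $\dashv$. So the output streams on $u\#u$ and $u'\#u$ are identical. But the promise requires output $\mathtt T$ on $u\#u$ and output $\mathtt F$ on $u'\#u$, which is a contradiction. The only care needed is to state explicitly that, without tools, the generation phase is a deterministic function of the post-input state. This follows directly from the controller definition, since $\delta_{obs}$ is never invoked.

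For part (ii) I would give a direct constant-size controller rather than invoking Theorem~\ref{thm:tape-turing}. Invoking the theorem would need a Turing machine for $L_{eq}$ in output-normal form, with the output convention adjusted to $\mathtt T/\mathtt F$. That would also work, and it gives a single controller for all $n$ because that machine is independent of $n$. The direct route is cleaner, though.

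The controller ignores the input phase, apart from reaching a fixed start state, and then runs the standard crossing-off algorithm. Assume $\Gamma$ contains an extra marker symbol $X$; if it does not, we enlarge the fixed alphabet once. With the head at cell $1$, the loop is:
\begin{enumerate}[leftmargin=2em]
\item Read the current cell. If it is $X$, move right and repeat. If it is $\#$, go to the final check. If it is a bit $b$, store $b$ in finite control and write $X$.
\item Move right past the remaining bits to $\#$, then past any $X$'s to the first unmarked symbol.
\item Compare that symbol with $b$. If it differs, or is $\blank$, output $\mathtt F$ and halt. Otherwise write $X$.
\item Move left until the left-end marker $\leftend$ is read, then move right once, and repeat the loop.
\end{enumerate}
In the final check, the controller moves right past $\#$ and any $X$'s. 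If the next symbol is $\blank$, it outputs $\mathtt T$; otherwise it outputs $\mathtt F$.

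The controller uses only a constant number of modes times one stored bit, so it is independent of $n$. Correctness follows from a loop invariant. After $k$ iterations, the first $k$ bits of $u$ and of $v$ are replaced by $X$, and the corresponding pairs were equal. Termination holds because each iteration marks two cells.

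The main thing to get right is the bookkeeping of the read/observe/move interleaving under the abstract controller semantics: each read returns an observation via $\delta_{obs}$, and each move or write returns $\ok$. This is routine but must be written as explicit state-mode transitions.
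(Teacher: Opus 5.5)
Your part (i) is the paper's argument. The paper takes the state $\phi(u)$ after reading $u$ rather than after $u\#$ and feeds the common continuation $\#u\dashv$, but the pigeonhole step is identical. Your remark that the no-tool generation phase is a function of the post-input state alone (since $\delta_{obs}$ is never invoked) is what the paper's ``same state, same behavior'' step relies on.

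For part (ii) you take a different route. The paper observes that $L_{eq}$ is decided by a single-tape crossing-off Turing machine and invokes Theorem~\ref{thm:tape-turing}. You instead implement the crossing-off algorithm directly as a tape-oracle controller.

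The paper's route is shorter and modular: the same sentence works for any decidable language. However, it implicitly needs the decider in output-normal form. That means the machine must erase its work and leave $\leftend\,\mathtt{T}\,\EOS$ or $\leftend\,\mathtt{F}\,\EOS$ on the tape, with $\mathtt T,\mathtt F$ among the symbols allowed in $y$. The paper's proof does not address this step.

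Your direct controller emits $\mathtt T$ or $\mathtt F$ straight from finite control. It needs no print phase and gives an explicit constant state count. Your loop invariant and termination argument are correct on promised inputs, including $n=0$: the first read returns $\#$ and the final check then sees $\blank$.

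Both routes need a marking symbol, since the paper's machine also ``marks'' bits. You do not actually need to enlarge $\Gamma$ for this. By the definition of $\mathcal O_{tape}$, $\EOS\in\Gamma$ and $\EOS\notin A$, so $\EOS$ can serve as your marker $X$ without clashing with $0$, $1$, $\#$, $\blank$, or $\leftend$. With that choice your construction runs on the oracle exactly as defined.
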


\begin{proof}
\emph{Proof of (i).}
Let \(C\) be a deterministic no-tool finite-state controller solving \(\mathrm{EQ}_n\), and let \(S\) be its state set. After reading the first block \(u\in\bits^n\), the controller is in some state
$
\phi(u)\in S.
$
Thus \(\phi:\bits^n\to S\).
If \(|S|<2^n\), then there exist distinct \(u,u'\in\bits^n\) such that
$
\phi(u)=\phi(u').
$
Now feed both computations the same continuation \(\#u\dashv\). Since the controller is deterministic and starts this continuation in the same state in both cases, its subsequent behavior is identical on
$
u\#u
\text{ and }
u'\#u.
$
Therefore it must output the same answer on both inputs. But \(u\#u\) is a positive instance, while \(u'\#u\) is a negative instance because \(u'\ne u\). This contradicts correctness. Hence
$
|S|\ge 2^n.
$

\emph{Proof of (ii).}
The language
$
L_{eq}=\{u\#u:u\in\bits^\ast\}
$
is decidable by a deterministic Turing machine. For example, a single-tape machine can repeatedly mark the leftmost unmarked bit before the delimiter, remember whether it was \(0\) or \(1\), scan to the corresponding leftmost unmarked bit after the delimiter, compare and mark it, and then return to the beginning. When no unmarked bit remains before the delimiter, it checks that no unmarked bit remains after the delimiter. It accepts exactly when all comparisons succeed.

This algorithm uses only finitely many tape symbols and finitely many control states. By the Turing-completeness theorem above, there is a finite-state controller with interactive access to \(\mathcal O_{tape}\) that simulates this machine. The same controller works for all \(n\), because it decides the full language \(L_{eq}\), and therefore solves each promised problem \(\mathrm{EQ}_n\).
\end{proof}

\section{Turing Completeness with a Generalized SSM-Type Controller}\label{sec:ssm-controller}

We now show that the preceding Turing-completeness statement can be realized by a natural finite-precision SSM-type controller. The precise statement is architecture-dependent. We prove the positive result for \emph{selective affine} state-space controllers, i.e. controllers whose transition matrix may depend on the current observation token. This selectivity is essential to the proof: a fixed linear time-invariant recurrence with one common transition matrix is not claimed to realize arbitrary finite control.
The proof has two steps. First, we show that every deterministic finite-state controller is exactly realizable by a one-hot selective affine SSM. Second, we instantiate this construction with the finite controller that simulates a Turing machine using the read/write/move tape oracle.

\subsection{A generalized finite-precision SSM model}

Let \(\Omega\) be a finite observation alphabet. Let \(\mathbb F_p\subset \mathbb R\) denote a fixed finite set of real numbers representable with \(p\) bits, and assume \(0,1\in \mathbb F_p\).
A generalized \(L\)-layer finite-precision SSM controller has hidden states:
\[
    h_{\ell,t}\in \mathbb F_p^d,
    \qquad
    1\leq \ell\leq L,\ t\geq 0,
\]
and layer outputs
$
    y_{\ell,t}\in \mathbb F_p^m.
$
Given an observation token \(\omega_t\in\Omega\), set
$
    y_{0,t}=\operatorname{emb}(\omega_t,t),
$
where \(\operatorname{emb}\) is any fixed finite-precision embedding. The layers evolve according to
\[
    h_{\ell,t}
    =
    A_{\ell,t} h_{\ell,t-1}
    +
    B_{\ell,t} y_{\ell-1,t},
    \qquad
    y_{\ell,t}
    =
    \operatorname{out}_{\ell,t}
    \bigl(h_{\ell,t},y_{\ell-1,t}\bigr),
    \tag{SSM}
\]
where \(A_{\ell,t}\) and \(B_{\ell,t}\) have finite-precision entries. As in selective SSMs, the matrices and readout maps may depend on the layer \(\ell\), the time \(t\), and the current lower-layer input \(y_{\ell-1,t}\), equivalently on the current observation token \(\omega_t\) after embedding.
The construction below uses only the one-layer special case \(L=1\), with \(B_{1,t}=0\), binary hidden states, and transition matrices selected by the current observation.

\subsection{Interaction and finite-control conventions}

We reuse the tape oracle of Section~\ref{sec:tape-oracle}.  For this
section only, we write its commands as
\[
    \mathcal C
    =
    \{\mathtt{read},\mathtt{left},\mathtt{right},\mathtt{stay}\}
    \cup
    \{\mathtt{write}_{\tau}:\tau\in\Gamma\},
\]
where $\mathtt{stay}$ leaves the head in place.  We write
$\Omega=\Gamma\cup\{\bot\}$ for the observation alphabet: a read
returns the scanned tape symbol, while every write or move command
returns the dummy observation $\bot$.  If
$\Delta\subseteq\Gamma\setminus\{\blank,\leftend,\EOS\}$ is the
visible output alphabet, then
\[
    \mathsf{Act}
    =
    \mathcal C\cup\Delta\cup\{\EOS,\mathtt{halt}\}.
\]
The actions in $\mathcal C$ are tool commands.  The actions in
$\Delta\cup\{\EOS\}$ are visible output tokens, not tool commands.  When
such a visible token is emitted, no tape operation is performed; the
environment appends the token to the visible output stream and returns
the dummy observation $\bot$ only to advance the recurrent update.

The action alphabet $\mathsf{Act}$ does not include chain-of-thought
tokens, private scratchpad tokens, or other hidden reasoning tokens.
Any private computation of the controller is represented by its finite
control or recurrent state, not by emitted actions.  The visible output
stream retains only the symbols in $\Delta\cup\{\EOS\}$.

After the input has initialized the tape, the finite control used in
the simulation is the following Moore-machine specialization of the
controller from Section~\ref{sec:finite-state-controller}.

\begin{definition}[Observation-driven finite-state controller]
An observation-driven deterministic finite-state controller is a tuple
$
    \mathcal K=(S,s_0,\delta,\lambda),
$
where $S$ is finite, $s_0\in S$,
$\delta:S\times\Omega\to S$, and
$\lambda:S\to\mathsf{Act}$.  In state $s_t$, the controller emits
$a_t=\lambda(s_t)$.  If $a_t=\mathtt{halt}$, the computation
terminates.  If $a_t\in\mathcal C$, the environment executes the
corresponding tape command and returns the resulting observation
$\omega_t\in\Omega$.  If $a_t\in\Delta\cup\{\EOS\}$, the environment
appends this visible token to the output stream and returns the dummy
observation $\omega_t=\bot$.  In either nonhalting case, the controller
moves to
$
    s_{t+1}=\delta(s_t,\omega_t).
$
\end{definition}

\subsection{Selective affine SSM controllers}

\begin{definition}[Selective affine SSM controller]
A finite-precision selective affine SSM controller over \(\Omega\) is a tuple
$
    \mathcal M
    =
    \bigl(d,h^{(0)},\{A_\omega,b_\omega\}_{\omega\in\Omega},\rho\bigr),
$
where \(d\in\mathbb N\),
$
    h^{(0)}\in \mathbb F_p^d,
$
each
$
    A_\omega\in \mathbb F_p^{d\times d},
    b_\omega\in \mathbb F_p^d,
$
and
$
    \rho:\mathbb F_p^d\to \mathsf{Act}
$
is an action readout.
At time \(t\), the controller emits
$
    a_t=\rho(h^{(t)}).
$
If \(a_t=\mathtt{halt}\), the computation terminates. Otherwise the environment handles \(a_t\) according to the convention above: a tool command is executed by the tape oracle, while a visible output token is appended to the visible stream and produces the dummy observation \(\bot\). The returned observation \(\omega_t\in\Omega\) determines the hidden-state update
$
    h^{(t+1)}
    =
    A_{\omega_t}h^{(t)}
    +
    b_{\omega_t}.
$
This is the one-layer selective specialization of \((\mathrm{SSM})\), with \(B_{1,t}=0\), \(A_{1,t}=A_{\omega_t}\), and the output map given by the finite action readout \(\rho\).
\end{definition}

The controller is time-homogeneous in this specialization: the maps
$\{A_\omega,b_\omega\}_{\omega\in\Omega}$ and $\rho$ are fixed for
the entire computation.  They may depend on the current observation
and recurrent state, but not explicitly on the absolute time index.

\subsection{Every finite controller is exactly realized by a selective SSM}

\begin{lemma}[Exact one-hot realization]
\label{lem:ssm-onehot}
Let
$
    \mathcal K=(S,s_0,\delta,\lambda)
$
be an observation-driven deterministic finite-state controller. Then there exists a finite-precision selective affine SSM controller that realizes exactly the same state evolution and emits exactly the same action at every time step.
One may choose hidden dimension
$
    d=|S|,
$
binary hidden states, transition matrices with entries in \(\{0,1\}\), and all biases equal to zero.
\end{lemma}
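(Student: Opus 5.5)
The plan is to encode each controller state as a standard basis vector and to let each observation select the corresponding transition matrix. Enumerate $S=\{s_1,\dots,s_{|S|}\}$ and set $d=|S|$. Write $e_i\in\{0,1\}^d$ for the $i$th standard basis vector, and define the encoding $\iota(s_i)=e_i$. Since $0,1\in\mathbb F_p$, every such vector lies in $\mathbb F_p^d$. Take $h^{(0)}=\iota(s_0)$ and $b_\omega=0$ for every $\omega\in\Omega$.

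For each observation $\omega\in\Omega$, define $A_\omega$ as the $0/1$ matrix whose $j$th column is $e_{k}$, where $s_k=\delta(s_j,\omega)$. Equivalently, $(A_\omega)_{kj}=1$ exactly when $\delta(s_j,\omega)=s_k$. Each column then contains exactly one $1$, so $A_\omega e_j=e_k$, and hence $A_\omega\iota(s)=\iota(\delta(s,\omega))$ for every $s\in S$. All entries lie in $\{0,1\}\subseteq\mathbb F_p$, and applying $A_\omega$ to a one-hot vector only selects a column. No arithmetic is performed beyond summing a single $1$ with zeros, so there is no rounding issue at finite precision. For the readout, set $\rho(e_i)=\lambda(s_i)$ and extend $\rho$ arbitrarily, for instance by the constant $\mathtt{halt}$, on the vectors of $\mathbb F_p^d$ that are not one-hot. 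This is permitted because $\rho$ is an arbitrary map into $\mathsf{Act}$.

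Correctness follows by induction on $t$, using the invariant $h^{(t)}=\iota(s_t)$ together with the fact that both systems have received the same history of observations. The base case holds by the choice of $h^{(0)}$. For the inductive step, the invariant gives $a_t=\rho(h^{(t)})=\lambda(s_t)$, so both systems emit the same action. If that action is $\mathtt{halt}$, both terminate. Otherwise the environment receives the same action in the same state, since the tape configuration and visible stream coincide by induction. It therefore returns the same observation $\omega_t$. The SSM update then gives $h^{(t+1)}=A_{\omega_t}\iota(s_t)=\iota(\delta(s_t,\omega_t))=\iota(s_{t+1})$.

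The only delicate point is that the induction must carry the environment state alongside the controller state. That is, the joint configuration of the controller and the oracle/output stream must be shown to agree at every step, not just the controller state. This is immediate because the environment is deterministic and sees identical action sequences. The remaining work is checking that the construction stays within the stated resources: the hidden states are one-hot binary vectors, the matrices have entries in $\{0,1\}$, and the biases are zero.
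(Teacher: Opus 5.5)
Your proposal is correct and follows essentially the same argument as the paper: the one-hot encoding $s_i\mapsto e_i$, $A_\omega$ defined columnwise as the one-hot image of $\delta(\cdot,\omega)$, zero biases, the readout $\rho(e_i)=\lambda(s_i)$ extended arbitrarily off the one-hot vectors, and induction on $t$. Your explicit point that the induction should also carry the deterministic environment state (tape and output stream) makes precise what the paper leaves implicit in ``the environment returns the same observation,'' but it is not a different route.
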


Here ``all biases equal to zero'' means that
\[
    b_\omega=0\in\mathbb F_p^d
    \qquad\text{for every }\omega\in\Omega.
\]
Thus, on each observation, the update is purely linear:
$h^{(t+1)}=A_{\omega_t}h^{(t)}$.

\begin{proof}
Enumerate the finite state set as
$
    S=\{s_1,\dots,s_m\},
    m=|S|.
$
Encode \(s_i\) by the \(i\)-th standard basis vector
$
    e_i\in \{0,1\}^m.
$
Let \(s_{i_0}=s_0\), and set
$
    h^{(0)}=e_{i_0}.
$

For each observation \(\omega\in\Omega\), define a matrix
$
    A_\omega\in\{0,1\}^{m\times m}
$
by specifying its columns:
\[
    A_\omega e_i=e_j
    \qquad
    \text{whenever }
    \delta(s_i,\omega)=s_j.
\]
Equivalently, the \(i\)-th column of \(A_\omega\) is the one-hot encoding of the state \(\delta(s_i,\omega)\). Set
\[
    b_\omega=0
    \qquad
    \text{for every }\omega\in\Omega.
\]

Define the SSM readout on reachable one-hot states by
$
    \rho(e_i)=\lambda(s_i).
$
On non-one-hot vectors, define \(\rho\) arbitrarily; those vectors are never reached in the construction.

We prove by induction on \(t\) that, whenever the finite-state controller is in state \(s_i\) at time \(t\), the SSM hidden state is \(h^{(t)}=e_i\). The assertion holds at \(t=0\) by construction. Suppose it holds at time \(t\). Both systems emit the same action because
$
    \rho(h^{(t)})
    =
    \rho(e_i)
    =
    \lambda(s_i).
$
If this action is \(\mathtt{halt}\), both halt. Otherwise the environment returns the same observation \(\omega_t\) to both systems. If
$
    \delta(s_i,\omega_t)=s_j,
$
then the SSM update gives
$
    h^{(t+1)}
    =
    A_{\omega_t}h^{(t)}
    =
    A_{\omega_t}e_i
    =
    e_j,
$
which is exactly the one-hot encoding of the next finite state. Hence the induction closes.
Therefore the SSM emits exactly the same action sequence as \(\mathcal K\), and hence has the same visible output stream.
\end{proof}

\subsection{Turing completeness}

\begin{theorem}[Turing completeness of selective finite-precision SSM controllers]
\label{thm:ssm-turing-complete}
Let \(T\) be a deterministic single-tape Turing machine with finite state set \(Q\) and tape alphabet \(\Gamma\). Assume \(T\) is in output-normal form: whenever it halts on input \(x\), its tape contains
$
    \leftend\, T(x)\,\EOS
$
starting at the leftmost cell, and the head is positioned at the first symbol of \(T(x)\EOS\).
Then there exists a one-layer finite-precision selective affine SSM controller \(\mathcal M_T\) such that, with interactive access to the tape oracle, \(\mathcal M_T\) halts and emits
$
    T(x)\EOS
$
if and only if \(T\) halts on \(x\). If \(T\) does not halt on \(x\), then \(\mathcal M_T\) does not halt.
Moreover, the hidden dimension may be chosen to be
$
    d=O(|Q|\,|\Gamma|).
$
All transition matrices may be chosen with entries in \(\{0,1\}\), and all bias vectors may be chosen to be zero.
\end{theorem}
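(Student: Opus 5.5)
The plan is to combine the two earlier ingredients. First, rebuild the controller of Theorem~\ref{thm:tape-turing} as an observation-driven finite-state controller $\mathcal K_T=(S,s_0,\delta,\lambda)$ in the Moore-machine format of this section. Second, apply Lemma~\ref{lem:ssm-onehot} to $\mathcal K_T$. The lemma immediately gives a one-layer selective affine SSM with $d=|S|$, binary one-hot hidden states, $\{0,1\}$ transition matrices and zero biases, and it emits exactly the same action sequence. All that remains is to construct $\mathcal K_T$ carefully and check that $|S|=O(|Q|\,|\Gamma|)$.

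Concretely, take states of the following forms:
\begin{itemize}[leftmargin=2em]
\item $\mathrm{Read}_q$ for $q\in Q_{nh}$, with $\lambda=\readcmd$;
\item $\mathrm{Write}_{q',\gamma',d}$ for each triple in the range of $\delta_T$, with $\lambda=\writecmd{\gamma'}$;
\item $\mathrm{Move}_{q',d}$, with $\lambda$ equal to $\leftcmd$, $\rightcmd$ or $\staycmd$ according to $d$;
\item a print-mode read state $\mathrm{PRead}$ with $\lambda=\readcmd$;
\item emission states $\mathrm{Emit}_\gamma$ with $\lambda=\gamma$ for $\gamma\in\Delta$, plus $\mathrm{Emit}_{\EOS}$;
\item a print-mode move state $\mathrm{PMove}$ with $\lambda=\rightcmd$;
\item a state $\mathrm{Halt}$ with $\lambda=\mathtt{halt}$.
\end{itemize}
The transitions are as follows. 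From $\mathrm{Read}_q$, the observation $\gamma\in\Gamma$ leads to $\mathrm{Write}_{\delta_T(q,\gamma)}$. Since a Moore machine cannot act on the read symbol and the transition in one step, the next write is indexed directly by the triple $\delta_T(q,\gamma)$; this removes the need to store $(q,\gamma)$ separately. On $\bot$, $\mathrm{Write}_{q',\gamma',d}$ leads to $\mathrm{Move}_{q',d}$. On $\bot$, $\mathrm{Move}_{q',d}$ leads to $\mathrm{Read}_{q'}$ if $q'$ is nonhalting and to $\mathrm{PRead}$ otherwise. From $\mathrm{PRead}$, the observation $\gamma$ leads to $\mathrm{Emit}_\gamma$. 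On $\bot$, $\mathrm{Emit}_\gamma$ leads to $\mathrm{PMove}$, and $\mathrm{Emit}_{\EOS}$ leads to $\mathrm{Halt}$. On $\bot$, $\mathrm{PMove}$ leads back to $\mathrm{PRead}$. Observations that cannot occur at a given state (for example $\bot$ after a read) are sent to an arbitrary fixed state, since $\delta$ must be total. The initial state is $\mathrm{Read}_{q_0}$, because $\mathcal O_{tape}$ starts with the head at cell $1$, which matches $T$'s initial configuration under the paper's conventions.

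Counting gives at most $|Q_{nh}|+3|Q|\,|\Gamma|+3|Q|+|\Gamma|+4=O(|Q|\,|\Gamma|)$ states. The $3|Q||\Gamma|$ term bounds the write states, indexed by $Q\times\Gamma\times\{L,R,S\}$, and the $3|Q|$ term bounds the move states. Lemma~\ref{lem:ssm-onehot} then yields $d=|S|=O(|Q|\,|\Gamma|)$.

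Correctness follows the invariant in the proof of Theorem~\ref{thm:tape-turing}. Whenever $\mathcal K_T$ is in $\mathrm{Read}_q$, the oracle tape and head equal $T$'s configuration, and $q$ is $T$'s current state. One read--write--move cycle applies exactly $\delta_T$; the move step follows the oracle's clamping at cell $0$, which agrees with $T$ because the tape retains $\leftend$. If $T$ halts, output-normal form ensures that print mode emits exactly $T(x)\EOS$ and then halts. Here the main subtlety is that symbols of $y$ lie in $A$, and we need $A\subseteq\Delta$, which I would assume by taking $\Delta\supseteq A$. If $T$ never halts, $\mathrm{Halt}$ is never reached, so the SSM never emits $\mathtt{halt}$.

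I expect the main obstacle to be bookkeeping rather than any deep step. The Moore-machine format separates actions from observation-driven updates, and the states must carry exactly enough information (the pending write symbol, direction and next state) for each action to be a function of the state alone. Once that is arranged, Lemma~\ref{lem:ssm-onehot} does all the remaining work.
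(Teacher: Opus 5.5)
Your proposal is correct and follows essentially the same route as the paper. Both build the observation-driven read--write--move controller with a print mode, count $O(|Q|\,|\Gamma|)$ states, and apply Lemma~\ref{lem:ssm-onehot} to obtain the binary one-hot, zero-bias realization; indexing write states by the triple $\delta_T(q,\gamma)$ rather than by $(q,\gamma)$, and assuming $A\subseteq\Delta$ explicitly, are harmless bookkeeping refinements of details the paper leaves implicit.
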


\begin{proof}
First construct an observation-driven finite-state controller
$
    \mathcal K_T=(S_T,s_0,\delta_T^{\mathcal K},\lambda_T)
$
that simulates \(T\) using the tape oracle.

The controller stores in finite control:
\begin{itemize}[leftmargin=2em]
\item the current simulated Turing-machine state \(q\in Q\);
\item a mode flag indicating whether it is about to read, write, move, or print;
\item when needed, one tape symbol \(\sigma\in\Gamma\) obtained from the most recent read.
\end{itemize}

Suppose the simulated machine is in a nonhalting state \(q\). The controller first emits the command \(\mathtt{read}\). If the returned observation is \(\sigma\in\Gamma\), it computes the Turing transition
\[
    \delta_T(q,\sigma)=(q',\sigma',d),
    \qquad
    d\in\{L,R,S\},
\]
in finite control. It then emits \(\mathtt{write}_{\sigma'}\). After receiving the dummy observation \(\bot\), it emits \(\mathtt{left}\), \(\mathtt{right}\), or \(\mathtt{stay}\), according as \(d=L\), \(d=R\), or \(d=S\). After the next dummy observation, it updates the stored simulated state to \(q'\) and repeats.

The standard simulation invariant is the following: after each completed simulated transition, the oracle tape configuration is exactly the tape configuration of \(T\), and the finite controller stores exactly the current finite control state of \(T\). The invariant holds at initialization by the definition of the tape oracle, and it is preserved because each simulated Turing step is implemented by exactly one read, one write, and one head-move command.

When the simulated machine reaches its halting state, the output-normal-form convention implies that the oracle head is positioned at the first symbol of \(T(x)\EOS\). The controller then enters print mode. In print mode it repeatedly emits \(\mathtt{read}\). If the returned symbol is \(\EOS\), it emits the visible token \(\EOS\) and then enters a state whose action is \(\mathtt{halt}\). If the returned symbol is \(\tau\in\Delta\), it emits the visible token \(\tau\), receives the dummy observation \(\bot\), emits \(\mathtt{right}\), receives \(\bot\), and repeats.

Thus \(\mathcal K_T\) emits precisely the visible stream \(T(x)\EOS\) when \(T\) halts, and it does not halt when \(T\) does not halt.
The number of states in \(\mathcal K_T\) is \(O(|Q|\,|\Gamma|)\): the controller needs only the simulated state, a constant-size mode flag, and at most one remembered tape symbol.

By Lemma~\ref{lem:ssm-onehot}, \(\mathcal K_T\) has an exact one-hot realization as a selective affine SSM controller of hidden dimension
$
    d=|S_T|=O(|Q|\,|\Gamma|).
$
The constructed SSM has binary transition matrices and zero biases. Since it emits exactly the same action sequence as \(\mathcal K_T\), it has exactly the same visible output stream and halting behavior. This proves the theorem.
\end{proof}

\begin{remark}[Dependence on selectivity]
The theorem is a statement about selective finite-precision SSMs, where the transition matrix may depend on the current observation token. It should not be read as a statement about every narrower SSM architecture. In particular, a fixed linear time-invariant recurrence with one common transition matrix is not automatically able to implement arbitrary finite-state control.
\end{remark}

\section{Discussion and Conclusion}\label{sec:conclusion}

We gave an exact characterization of when tool access increases the
expressive power of finite-precision recurrent sequence models. The
dichotomy is sharp and interface-driven: finite-state
bounded-interface tools can always be internalized at a cost of
$\log_2|M|+O(1)$ bits, whereas a single minimal unbounded tool (a
tape with local read/write/move commands) yields full Turing
completeness with only $O(\log|Q|+\log|\Gamma|)$ bits of controller
memory, and this is witnessed concretely by the exponential
$\mathrm{EQ}_n$ separation. The construction is realized exactly by a
one-layer selective affine SSM with binary one-hot dynamics, which
identifies \emph{selectivity} (input-dependent transition matrices) as
the architectural ingredient that suffices for exact finite control in
the interactive setting.

The supplementary material sharpens the resource accounting in both
directions. On the one hand, by offloading the simulated machine's
description onto the tape, $O(\log B)$ recurrent bits suffice to
simulate any $B$-state Turing machine
(Appendix~\ref{app:log-simulation}), using small universal machines
\citep{NearyWoods2009}. On the other hand, if external memory is
removed and an arbitrary $B$-state transition map must be realized in
one exact affine recurrent update, then total recurrent dimension
exactly $B-1$ is necessary and sufficient, for any number of
triangular selective-affine layers. The logarithmic and linear answers
are therefore both correct, for different, precisely specified
resource conventions, and the gap between them quantifies exactly
what external persistent memory and machine-specific readout
parameters buy.

Natural directions for future work include stochastic and learned
controllers, transformers as controllers, multiple concurrent tools,
and the interaction between tool access and chain-of-thought budgets
studied in \citet{merrill2024cot} and \citet{zubic2026ssm}.


\bibliography{acml26}

\appendix


\newpage

\section{Realization with three-symbol model}\label{app:three-symbol}

Fix the input and tape alphabets
$
    \Sigma=\{0,1\}$, $\Gamma=\{0,1,\blank\}.
$
The tape is indexed by $\mathbb Z$.  On input
$x=x_0\cdots x_{n-1}$, cell $i$ initially contains $x_i$ for
$0\leq i<n$, every other cell contains $\blank$, and the head starts at
cell $0$.
The fixed tape oracle supports the commands
\[
    \readcmd,\quad
    \leftcmd,\quad
    \rightcmd,\quad
    \staycmd,\quad
    \writecmd{\gamma}\quad(\gamma\in\Gamma).
\]
A read command returns the currently scanned symbol in $\Gamma$.
Every write or move command returns the dummy observation $\ok$.
A deterministic Turing machine recognizes a language if it accepts
every member and never accepts a nonmember.  On a non-member, it may
reject or run forever.

\begin{definition}[State-bounded Turing-machine classes]
For $B\geq 1$, let
\[
    \mathcal T_B
    =
    \left\{
        K\subseteq\Sigma^*:
        K\text{ is recognized by a Turing machine with at most }
        B\text{ internal states}
    \right\}.
\]
Accept and reject are regarded as external halting outcomes and are
not counted as internal states.
\end{definition}

\begin{definition}[One-layer selective affine oracle controller]
A one-layer selective affine oracle controller consists of a recurrent
state
$
    h=(h_1,\ldots,h_d),
$
an action readout $\rho$, and, for every observation
$
    \omega\in\Omega:=\Gamma\cup\{\ok\},
$
an affine recurrent update:
$
    \Phi_\omega(h)=A_\omega h+b_\omega.
$
At each interaction step, the controller emits the action $\rho(h)$.
If this action is accepting or rejecting, the computation halts.
Otherwise, the tape oracle executes the action, returns an observation
$\omega$, and the state is updated to $\Phi_\omega(h)$.
Each recurrent coordinate is stored in a finite-precision register.
If coordinate $j$ uses $p_j$ bits, the recurrent-memory cost is:
\[
    \operatorname{mem}(\mathcal M)
    :=
    \sum_{j=1}^{d}p_j.
\]
The affine updates are required to be exact on every reachable state.
The fixed parameters and the finite readout table are not charged to
the recurrent-memory budget.
Let $\mathcal S_b^{(1)}$ be the class of languages recognized by such
a controller, with access to the fixed tape oracle above, using at
most $b$ recurrent-state bits.
\end{definition}

The exclusion of parameter size from the recurrent-memory budget is
essential below.  In particular, the readout may contain a finite
lookup table depending on the simulated Turing machine.

\subsection{The linear one-hot simulation bound}

\begin{lemma}[Linear one-hot simulation bound]\label{lem:onehot-linear}
There is an absolute constant $C_0$ such that, for every $B\geq 1$,
\[
    \mathcal T_B
    \subseteq
    \mathcal S_{C_0B}^{(1)}.
\]
One may take, for example, $C_0=9$.
\end{lemma}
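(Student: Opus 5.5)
The plan is to reuse the simulation of Theorem~\ref{thm:tape-turing} together with the one-hot realization of Lemma~\ref{lem:ssm-onehot}. We count states exactly rather than up to $O(\cdot)$, so that the one-hot dimension comes out at most $9B$ with one bit per coordinate. Fix $K\in\mathcal T_B$ and a recognizer $T$ with internal state set $Q$, $|Q|\le B$, over $\Gamma=\{0,1,\blank\}$. Accept and reject are not internal states. So we do not need the output-normal-form printing phase: when $T$ enters accept or reject, the controller directly emits the corresponding halting action.

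We first build an observation-driven controller $\mathcal K_T$ with at most $9B$ states, organized per simulated state $q\in Q$. Each $q$ gets one \emph{read} state $r_q$ with $\lambda(r_q)=\readcmd$. On observation $\sigma\in\Gamma$, with $\delta_T(q,\sigma)=(q',\sigma',d)$, the controller moves to a \emph{write} state $w_{q,\sigma}$ with $\lambda(w_{q,\sigma})=\writecmd{\sigma'}$. If $q'$ is accept or reject, it instead moves directly to one of two global halting states. That gives $3$ write states per $q$. After the $\ok$ observation, it goes to a \emph{move} state $m_{q,\sigma}$ whose action is $\leftcmd$, $\rightcmd$, or $\staycmd$ according to $d$. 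On $\ok$ it then goes to $r_{q'}$. This gives $3$ move states per $q$.

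The count is $1+3+3=7$ states per $q$, plus $2$ global halting states, so at most $7B+2\le 9B$. The slack also absorbs possible extra bookkeeping states, for example a dedicated initial state or idle states for unused observations. The remaining transitions (e.g.\ $\ok$ arriving in a read state) are unreachable and are set arbitrarily.

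Correctness follows the invariant in the proof of Theorem~\ref{thm:tape-turing}. After each completed cycle $r_q\to w_{q,\sigma}\to m_{q,\sigma}\to r_{q'}$, the oracle tape and head coincide with those of $T$, and the control is $r_{q'}$. This holds initially because the three-symbol oracle initializes the tape exactly as $T$'s input tape, with the head at cell $0$. Therefore $\mathcal K_T$ reaches the accepting halt state iff $T$ accepts. On nonmembers it rejects or runs forever, which matches the recognition convention.

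Finally we apply Lemma~\ref{lem:ssm-onehot}. Its proof uses a general observation alphabet, so it transfers verbatim to $\Omega=\Gamma\cup\{\ok\}$. It yields $d=|S_T|\le 9B$, binary one-hot states, $\{0,1\}$ matrices, and zero biases, and the updates are exact on reachable states. Each coordinate takes values in $\{0,1\}$ and needs $p_j=1$ bit, so $\operatorname{mem}=d\le 9B$, giving $K\in\mathcal S^{(1)}_{9B}$. The readout $\rho$ is a lookup table and is not charged. The main care point is the exact constant: we must make sure no hidden auxiliary states (initial or idle states, halting states, or states handling the $\staycmd$ case) push the count past $9B$. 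The per-$q$ accounting above, with $B\ge1$ ensuring $7B+2\le 9B$, handles this.
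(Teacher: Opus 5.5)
Your proposal is correct and follows the paper's proof essentially step for step. It uses the same $R_q\to W_{q,\gamma}\to D_{q,\gamma}\to R_{q'}$ protocol with two global halting states, the same count $|Q|+2|\Gamma||Q|+2\le 7B+2\le 9B$, and the same one-hot realization with $\{0,1\}$ matrices, zero biases and one bit per coordinate; the only cosmetic difference is that you cite Lemma~\ref{lem:ssm-onehot} for the last step, whereas the paper re-derives that construction inline over $\Omega=\Gamma\cup\{\ok\}$.
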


The $O(B)$ inclusion is also a direct consequence of the construction
in Theorem~\ref{thm:ssm-turing-complete}: once the tape alphabet is
fixed, its bound $O(|Q|\,|\Gamma|)$ becomes $O(B)$.  The proof below is
retained to make the controller-state bookkeeping and the explicit
constant $C_0=9$ transparent.

\begin{proof}
Let
$
    M=(Q,q_0,\delta)
$
be a Turing machine with $|Q|\leq B$ and tape alphabet $\Gamma$.
We first construct an observation-driven finite-state controller
$\mathcal K_M$ that simulates $M$ through the tape oracle.

For each $q\in Q$, introduce a state $R_q$ whose action is
$\readcmd$.  If the oracle returns $\gamma\in\Gamma$, the transition
table of $M$ determines either a halting outcome or a triple
\[
    \delta(q,\gamma)=(q',\tau,d),
    \qquad
    q'\in Q,\quad
    \tau\in\Gamma,\quad
    d\in\{L,R,S\}.
\]
In the latter case the controller passes through the following
constant-length protocol:
\[
    R_q
    \longrightarrow
    W_{q,\gamma}
    \longrightarrow
    D_{q,\gamma}
    \longrightarrow
    R_{q'}.
\]
The state $W_{q,\gamma}$ emits $\writecmd{\tau}$.
After receiving $\ok$, the state $D_{q,\gamma}$ emits
$\leftcmd$, $\rightcmd$, or $\staycmd$, according as
$d=L$, $R$, or $S$.  After the next observation $\ok$, the controller
enters $R_{q'}$.

If $\delta(q,\gamma)$ is accepting or rejecting, the controller enters
the corresponding halting state.  Thus the number of controller states
is at most:
\[
    |Q|+|\Gamma||Q|+|\Gamma||Q|+2
    \leq 7B+2
    \leq 9B.
\]

A direct induction on the number of simulated Turing-machine
transitions shows that, whenever the controller is in state $R_q$, the
oracle tape and head position coincide with those of $M$, and $q$ is
the current internal state of $M$.  Hence the controller recognizes
the same language as $M$.

It remains to realize this finite controller by a selective affine
SSM.  Enumerate its state set as:
\[
    S=\{s_1,\ldots,s_N\},
    \qquad
    N\leq 9B,
\]
and encode $s_i$ by the standard basis vector
$
    e_i\in\{0,1\}^{N}.
$
For each observation $\omega\in\Omega$, define a matrix
$A_\omega\in\{0,1\}^{N\times N}$ columnwise by
$
    A_\omega e_i=e_j
    \Longleftrightarrow
    \delta_{\mathcal K_M}(s_i,\omega)=s_j.
$
Set every bias to zero.  Define the action readout on the reachable
one-hot vectors by
$
    \rho(e_i)=\lambda_{\mathcal K_M}(s_i).
$
The set of one-hot vectors is invariant under every $A_\omega$, and
the resulting SSM exactly reproduces the controller's state and action
sequences.
Each recurrent coordinate is binary, so the recurrent-memory cost is
$N\leq 9B$ bits.  Therefore
$
    \mathcal T_B\subseteq\mathcal S_{9B}^{(1)}.
$
\end{proof}

\section{The logarithmic oracle-assisted simulation}\label{app:log-simulation}

The proposed logarithmic statement becomes meaningful only after the
oracle and the resource accounting have been fixed.  Under the model
above, it is true.

\begin{lemma}[Fixed universal tape controller]
There exist:
\begin{enumerate}[label=(\roman*)]
\item a fixed observation-driven finite-state tape controller
      $\mathcal U$;
\item a self-delimiting binary encoding $M\mapsto\langle M\rangle$;
\end{enumerate}
such that, whenever the tape contains
$
    x\,\blank\,\langle M\rangle\,\blank
$
and the head is on the final blank, $\mathcal U$ accepts, rejects, or
runs forever exactly as $M$ does on $x$.
Moreover, if $M$ has at most $B$ internal states and tape alphabet
$\Gamma$, then
$
    |\langle M\rangle|
    \leq
    cB\log_2(B+1)
$
for an absolute constant $c$.
\end{lemma}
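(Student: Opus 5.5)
The plan is to separate the two claims. First fix the encoding and prove the length bound, which is pure counting. Then build $\mathcal U$ as a concrete universal Turing machine over the three-symbol alphabet $\Gamma=\{0,1,\blank\}$. Finally, convert $\mathcal U$ into an observation-driven tape controller using the read/write/move protocol from the proof of Lemma~\ref{lem:onehot-linear}.

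\textbf{Encoding.} Number the states of $M$ by $1,\dots,B$. Reserve two extra codes for the external outcomes accept and reject. Write each of these $B+2$ codes in binary with a fixed width $w=\lceil\log_2(B+3)\rceil$. The table $\delta$ then becomes a list of $|Q|\,|\Gamma|\le 3B$ entries. Each entry records the target state, written symbol and move, using $w+O(1)$ bits; the source pair $(q,\gamma)$ is implicit from the entry's position. To make $\langle M\rangle$ self-delimiting and readable with symbols in $\{0,1\}$ only, do two things. First, prefix the description with $w$ in unary. Second, double every bit, $b\mapsto bb$, and use $01$ as a field separator. The total length is $O(w)+3B\cdot O(w+1)\le cB\log_2(B+1)$ for an absolute $c$, which is the stated bound. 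Since $\langle M\rangle$ contains no $\blank$, it is delimited on the tape by the surrounding blanks.

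\textbf{Universal machine.} I would not rely on a specific small universal machine from \citet{NearyWoods2009}, because their input conventions differ from ours. Instead I would describe a direct two-region simulation, with $\mathcal U$ a fixed finite machine. The tape to the left of the first blank serves as $M$'s work tape. The region holding $\langle M\rangle$ also stores a current-state register of width $w$, placed next to the code. $\mathcal U$ must also mark $M$'s head position and handle $M$'s blanks without confusing them with the separator. For this I would re-encode $M$'s work tape in blocks of two cells: $00\mapsto 0$, $01\mapsto 1$, $10\mapsto\blank_M$, with $11$ marking the simulated head. This needs a one-time sweep that expands $x$, followed by bookkeeping that shifts the program region rightward whenever the work tape grows. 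Such shifting is standard for a single-tape machine, and finitely many states suffice.

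Each simulated step then has three parts. (a) $\mathcal U$ locates the head marker and reads the symbol under it. (b) It finds the table entry indexed by (current state, symbol). It does this by scanning entries and comparing the register bitwise against a counter, using temporary marks realized by the doubled-bit slack. (c) It copies the new state into the register, shuttling back and forth one bit at a time. It then returns to the work region, writes the symbol and moves the marker. If the new state is the accept or reject code, $\mathcal U$ halts with that outcome. A step-by-step invariant shows that $\mathcal U$ halts with the same outcome as $M$, or runs forever exactly when $M$ does. Finally, $\mathcal U$ is a fixed Turing machine with alphabet $\Gamma$ and head starting on the final blank. The construction in the proof of Lemma~\ref{lem:onehot-linear} turns it into a fixed observation-driven tape controller with at most $9|Q_{\mathcal U}|$ states, which gives item~(i).

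\textbf{Main obstacle.} The hardest part is the bookkeeping inside $\mathcal U$ with only three tape symbols. Marking positions, distinguishing $M$'s blanks from region delimiters, and shifting the program region when the work tape grows must all be done without an enlarged alphabet. The block encoding is what makes this work, since it frees a spare pattern inside each block. The most error-prone detail, which I would check first, is writing the shift routine and the register comparison precisely. The length bound and the final controller conversion are routine.
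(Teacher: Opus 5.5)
Your plan is correct in outline, but it reaches item (i) by a different route from the paper. The paper never builds $\mathcal U$ by hand. It takes a fixed universal one-tape machine with three tape symbols from \citet{NearyWoods2009}, by composing their Theorems~2.1 and~3.1, and renames its symbols as $0,1,\blank$. It then prepends a fixed preprocessor that rewrites $x\,\blank\,\langle M\rangle\,\blank$ into the initial encoding that machine expects. Finally it applies the read--write--move protocol, exactly as you do at the end. Its length bound is the same counting argument as yours: $3B$ table entries, each carrying an $O(\log B)$-bit next-state label, preceded by a self-delimiting encoding of $B$.

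The citation route is short and inherits correctness from a published machine. However, it leaves the glue unspecified: the adaptation to this tape layout, and how distinct accept/reject outcomes are recovered. Your two-region simulator is self-contained. It makes the accept/reject codes, the distinction between $M$'s blanks and region delimiters, and the two-way tape explicit. The price is bookkeeping you still have to write out: the in-place expansion of $x$, shifting the program region, and the table lookup.

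When you write it out, fix two details.
\begin{enumerate}
\item With $00$, $11$ and $01$ in use, doubling leaves only the single spare pair $10$. That pair cannot mark bits of both values without losing them. You can instead mark separators (for instance, place a separator after every bit), use three- or four-cell blocks, or compare destructively on scratch copies. Each option costs only a constant factor, so $cB\log_2(B+1)$ survives.
\item The unary prefix plus separators do not make $\langle M\rangle$ self-delimiting. Without a terminator or the number of states, one machine's code can be a proper prefix of another's with the same $w$. Append a terminator (say two consecutive separators) or encode $|Q|$, as the paper does.
\end{enumerate}
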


\begin{proof}
The existence of a fixed universal one-tape Turing machine with three
tape symbols follows, for example, by composing Theorems~2.1 and~3.1
of \citet{NearyWoods2009}.  Its three symbols may be
renamed as $0$, $1$, and $\blank$.

A fixed tape preprocessor can convert the layout
$
    x\,\blank\,\langle M\rangle\,\blank
$
into the particular initial encoding required by that universal
machine.  Composing the preprocessor with the universal machine and
then replacing every Turing transition by the fixed read--write--move
oracle protocol produces the required finite controller $\mathcal U$.

For the length estimate, label the states of $M$ by integers below
$B$.  Its transition table has $3B$ entries.  Each nonhalting entry
contains a next-state label of length
$\lceil\log_2 B\rceil$, a written symbol, and a head direction.
Halting entries require only constant additional information.  A
self-delimiting encoding of $B$, followed by the transition entries,
therefore has length
$
    O\!\left(B\log(B+1)\right).
$
\end{proof}

\begin{theorem}[Logarithmic oracle-assisted simulation]\label{thm:log-simulation}
There is an absolute constant $C_1$ such that, for every $B\geq 1$,
\[
    \mathcal T_B
    \subseteq
    \mathcal S_{
        \left\lceil C_1\log_2(B+1)\right\rceil
    }^{(1)}.
\]
\end{theorem}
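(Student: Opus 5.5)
Fix $K\in\mathcal T_B$, recognized by a Turing machine $M$ with at most $B$ internal states over $\Gamma=\{0,1,\blank\}$. The idea is to keep only $O(\log B)$ bits of recurrent state and put the simulated machine's program on the tape. The fixed universal tape controller $\mathcal U$ of the preceding lemma then runs on that tape copy. The controller works in two phases.

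In the first (\emph{copy}) phase, the controller moves right past the input $x$ to the first blank, moves one cell further, writes the word $w=\langle M\rangle$ symbol by symbol, and stops on the final blank after $w$, as the lemma requires. Concretely, it issues $\readcmd$ and then $\rightcmd$ until it observes $\blank$. It then issues $\rightcmd$ once, and for $i=1,\dots,|w|$ it issues $\writecmd{w_i}$ followed by $\rightcmd$. The one delicate point is the input: $x\in\{0,1\}^*$ contains no blanks, so the scan terminates at the cell immediately after $x$, and the cells beyond it are blank.

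The main obstacle is doing this copy with only $O(\log B)$ recurrent bits, since $|w|\le cB\log_2(B+1)$. My approach is to store a binary counter $i\in\{0,\dots,|w|\}$ together with a constant-size phase tag in $O(1)$ additional one-hot/binary coordinates. This counter needs $\lceil\log_2(cB\log_2(B+1)+1)\rceil=O(\log(B+1))$ bits. The symbol to write is then supplied by the readout: $\rho$ maps the counter value $i$ to $\writecmd{w_{i+1}}$. This is allowed because the readout table is a fixed parameter that is not charged to the memory budget. The remaining difficulty is that the counter must be incremented by an \emph{affine} map that is exact on reachable states. I would do this by representing the counter as a single coordinate $h_1$ with $p_1=O(\log B)$ bits of precision holding the integer $i$. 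On the relevant observation ($\ok$, after a $\rightcmd$ in the writing phase), the update is $h_1\mapsto h_1+1$, i.e.\ $b_\omega$ has a $1$ in that coordinate. Since the same observation $\ok$ also follows writes, I would interleave a parity/phase bit so that only every second $\ok$ increments. Alternatively, I would let the readout choose between write and move based on a phase bit that the selective matrices toggle.

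The same matrices also have to realize the phase switches, which raises a further subtlety: the update is selective only on $\omega$, not on $h$. For example, the switch to the next phase should happen when the counter reaches $|w|$, but the affine update cannot test for that. My plan is to let the \emph{readout} detect the end: when $h_1=|w|$, $\rho$ emits a transition action, and the phase coordinates then change accordingly. I expect it is cleanest to give each phase its own one-hot block of $O(1)$ coordinates and a separate counter coordinate, arranged so that the affine maps act exactly on all reachable states.

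In the second (\emph{simulation}) phase, once the copy is finished, the controller behaves as $\mathcal U$. This is a fixed finite controller independent of $B$, realized by the one-hot construction of Lemma~\ref{lem:ssm-onehot} in a constant number of binary coordinates. During the copy phase those coordinates are held at a fixed dormant vector, with $A_\omega$ acting as the identity there. At the handoff the readout, seeing that the counter equals $|w|$, makes the transition into $\mathcal U$'s initial one-hot state. Thereafter the counter is frozen, by making the relevant block of $A_\omega$ the identity with zero bias, and the readout ignores it. By the lemma, $\mathcal U$ accepts, rejects, or diverges exactly as $M$ on $x$. Summing the costs gives $O(1)+O(\log(B+1))\le\lceil C_1\log_2(B+1)\rceil$ for a suitable absolute $C_1$, and hence $K\in\mathcal S^{(1)}_{\lceil C_1\log_2(B+1)\rceil}$.
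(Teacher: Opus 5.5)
Your architecture matches the paper's: scan to the first blank after $x$, write $\langle M\rangle$ from a readout indexed by an $O(\log B)$-bit counter, then hand off to $\mathcal U$. However, the step you flag as the main obstacle is resolved with mechanisms the model does not have. The pair $(A_\omega,b_\omega)$ is fixed for the whole run and depends only on $\omega$, and $\rho$ only chooses an action.

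A $1$ in $b_{\ok}$ increments the counter after \emph{every} $\ok$. That includes each $\rightcmd$ of the scan, so with no resets the counter reaches about $|x|$ before copying starts. It includes each write of the copy phase, since a parity bit cannot switch off a constant bias. It also includes each of the possibly unboundedly many writes and moves of $\mathcal U$. ``Freezing'' the counter after the handoff by making a block of $A_\omega$ the identity with zero bias is not an available move. Neither is holding $\mathcal U$'s block dormant ``with $A_\omega$ acting as the identity there'', because that block must implement $\mathcal U$'s transitions. As written, the counter's reachable values are unbounded, so no finite-precision register realizes the update exactly.

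Likewise, the readout cannot ``make the transition into $\mathcal U$'s initial one-hot state''. Its only influence on the next state is the observation its action elicits, and every write or move returns $\ok$. The handoff must therefore be a $\readcmd$ returning $\blank$. The single map $\Phi_{\blank}$ must then send this situation to $e_{\mathcal U,0}$, while also handling the scanner's discovery of the end of $x$ and every blank that $\mathcal U$ reads.

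The missing idea is to gate by state coordinates through the \emph{linear} part of the update. The paper takes $h=(z,\eta,u)$ with
$\Phi_{\ok}(z,\eta,u)=(z+\eta,\eta,D_{\ok}u)$,
$\Phi_{\blank}(z,\eta,u)=(0,\ell(u),D_{\blank}u+\eta\,e_{\mathcal U,0})$, and
$\Phi_\sigma(z,\eta,u)=(0,0,D_\sigma u)$ for $\sigma\in\{0,1\}$.
The flag $\eta$ is $1$ exactly during copying, so $z$ moves only then. The counter indexes all $R=2D+1$ script actions, which makes a parity bit unnecessary. The scanner's blank read sets $\eta=\ell(u)=1$ and $u=0$, and $u=0$ serves as the dormant value that every $D_\omega$ fixes. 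The closing $\readcmd$ at $z=R$ then gives $(R,1,0)\mapsto(0,0,e_{\mathcal U,0})$.

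If you keep a parity bit, the gate must still be a single coordinate, e.g.\ a one-hot ``copy phase, about to move'' state $u_{\mathrm{mv}}$ with $z\mapsto z+u_{\mathrm{mv}}$. A product such as (phase bit)$\times$(parity bit) is not affine. Once these couplings are written out, your one-hot-block-per-phase plan goes through at the same $O(\log(B+1))$ cost.
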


\paragraph{Where the description of $M$ is stored.}
For each fixed simulated machine $M$, the finite word
$w=\langle M\rangle$ is hard-coded in the controller's fixed action
readout: when the counter has value $j$, the readout emits the
preassigned action $a_j$ that writes the next part of $w$.  Thus the
whole description is not initially stored in the recurrent state.
During the initialization phase, the controller copies $w$ onto the
oracle tape, which thereafter stores the description for the fixed
universal controller $\mathcal U$.  This is compatible with the
resource convention above because the readout table and oracle memory
are not charged to recurrent memory; only the counter, flag, and
constant-size finite-control register are charged.

\begin{proof}
Fix a Turing machine $M$ with at most $B$ states.  Write
\[
    w=\langle M\rangle=w_1\cdots w_D,
    \qquad
    D\leq cB\log_2(B+1).
\]
We construct a one-layer selective affine controller that:

\begin{enumerate}[label=(\arabic*)]
\item scans to the first blank following the input;
\item writes $w$ immediately to the right of that blank;
\item transfers control to the fixed universal controller
      $\mathcal U$.
\end{enumerate}

Let
\[
    a_0,a_1,\ldots,a_{R-1},
    \qquad
    R=2D+1,
\]
be the fixed action script
\[
    \rightcmd,\,
    \writecmd{w_1},\rightcmd,\,
    \writecmd{w_2},\rightcmd,\,
    \ldots,\,
    \writecmd{w_D},\rightcmd.
\]
Every action in this script returns the dummy observation $\ok$.

Let $S_{\mathcal U}$ be the fixed finite state set of the universal
controller.  Add two scanner states:
\[
    s_{\mathrm{read}}
    \quad\text{and}\quad
    s_{\mathrm{move}}.
\]
Let
$
    u\in\{0,1\}^{k}
$
be a one-hot register for these states and the states of
$\mathcal U$, where $k$ is an absolute constant independent of $B$.
Let
$
    z\in\{0,1,\ldots,R\}
$
be an integer counter and let
$
    \eta\in\{0,1\}
$
be a program-writing flag.  The complete recurrent state is
$
    h=(z,\eta,u).
$

For every observation $\omega$, let $D_\omega$ be the zero--one
matrix implementing the appropriate one-hot transition of the fixed
scanner/universal controller.  In particular,
\[
    D_0e_{s_{\mathrm{read}}}
    =
    D_1e_{s_{\mathrm{read}}}
    =
    e_{s_{\mathrm{move}}},
\]
and
\[
    D_{\ok}e_{s_{\mathrm{move}}}
    =
    e_{s_{\mathrm{read}}}.
\]
Let $\ell(u)$ denote the coordinate of $u$ corresponding to
$s_{\mathrm{read}}$, and let $e_{\mathcal U,0}$ be the one-hot
encoding of the initial state of $\mathcal U$.

Define the recurrent updates on reachable states by
\[
    \Phi_{\ok}(z,\eta,u)
    =
    \bigl(z+\eta,\eta,D_{\ok}u\bigr),
    \tag{1}
\]
\[
    \Phi_{\blank}(z,\eta,u)
    =
    \bigl(
        0,\,
        \ell(u),\,
        D_{\blank}u+\eta e_{\mathcal U,0}
    \bigr),
    \tag{2}
\]
and, for $\sigma\in\{0,1\}$,
\[
    \Phi_\sigma(z,\eta,u)
    =
    \bigl(0,0,D_\sigma u\bigr).
    \tag{3}
\]
Each of these maps is affine.

The initial recurrent state is
$
    h^{(0)}=(0,0,e_{s_{\mathrm{read}}}).
$
The readout is defined as follows:
\begin{itemize}[leftmargin=2em]
\item in state $e_{s_{\mathrm{read}}}$ it emits $\readcmd$;
\item in state $e_{s_{\mathrm{move}}}$ it emits $\rightcmd$;
\item if $\eta=1$, $u=0$, and $z=j<R$, it emits $a_j$;
\item if $\eta=1$, $u=0$, and $z=R$, it emits $\readcmd$;
\item if $u$ encodes a state of $\mathcal U$, it emits the action
      prescribed by $\mathcal U$.
\end{itemize}
The readout may be defined arbitrarily on unreachable states.

The scanner alternates between reading the current tape cell and
moving right.  When it first reads $\blank$, equation~(2) sends
$
    (0,0,e_{s_{\mathrm{read}}})
    \longmapsto
    (0,1,0).
$
The head is then on the blank immediately following the input.

While $\eta=1$, every scripted action produces $\ok$, and equation~(1)
gives
\[
    (j,1,0)\longmapsto(j+1,1,0).
\]
Consequently, the controller emits the complete script and writes
$\langle M\rangle$ to the tape.  After the final move, the head is on
the blank immediately following the description and the recurrent
state is $(R,1,0)$.  Its next action is $\readcmd$.  The returned blank
and equation~(2) give
$
    (R,1,0)
    \longmapsto
    (0,0,e_{\mathcal U,0}).
$
At this point the tape has the form
$
    x\,\blank\,\langle M\rangle\,\blank,
$
the head is on the final blank, and the controller has entered the
initial state of $\mathcal U$.  It therefore accepts, rejects, or runs
forever exactly as $M$ does on $x$.

The counter $z$ requires
$
    \left\lceil\log_2(R+1)\right\rceil
    =
    O(\log(B+1))
$
bits.  The flag $\eta$ and the one-hot register $u$ require only a
constant number of additional bits.  Hence the total recurrent memory
is
$
    O(\log(B+1)).
$
All updates are time-homogeneous and affine. The machine-specific
description occurs only in the finite readout values assigned to the
counter states.
\end{proof}

\begin{remark}
The theorem relies on two conventions:

\begin{enumerate}[label=(\roman*)]
\item the tape oracle is free external persistent memory;
\item the finite model parameters and readout table are not charged to
      the recurrent-memory budget.
\end{enumerate}

If the simulated transition table had to be represented solely by the
recurrent affine state and applied in one recurrent step, the
logarithmic conclusion would be false.  This is precisely the issue
addressed next.
\end{remark}

\subsection{Direct realization of arbitrary transition maps}

The two questions referred to below are stated explicitly after the
model definition.  They are meaningful only after external memory is
excluded and ``implemented'' is required to mean one exact recurrent
update per input symbol.
We use a layered model that is at least as general as the usual
triangular selective-affine SSM recurrence.

\begin{definition}[Layered triangular selective-affine realization]
Fix a field $\mathbb K$ and $L\geq 1$.  An $L$-layer triangular
selective-affine recurrent system has states
\[
    h_\ell\in\mathbb K^{d_\ell},
    \qquad
    1\leq\ell\leq L.
\]
For each input symbol $a\in\{0,1\}$, the update has the form
\[
    h_1^+
    =
    A_{1,a}h_1+b_{1,a},
\]
and, for $2\leq\ell\leq L$,
\[
    h_\ell^+
    =
    A_{\ell,a}(h_1,\ldots,h_{\ell-1})h_\ell
    +
    b_{\ell,a}(h_1,\ldots,h_{\ell-1}).
    \tag{4}
\]
The coefficient functions in~(4) may depend arbitrarily on all lower
layers.  Dependence on updated lower-layer values is also covered,
since those values are themselves functions of the old lower-layer
prefix.
The system exactly realizes
\[
    f:[B]\times\{0,1\}\longrightarrow[B]
\]
if there is an injective encoding
\[
    e(q)
    =
    \bigl(e_1(q),\ldots,e_L(q)\bigr)
\]
such that
\[
    \Phi_a(e(q))=e(f(q,a))
\]
for every $q\in[B]$ and $a\in\{0,1\}$.

Define the worst-case required total recurrent dimension by:
\[
    D_L(B)
    :=
    \max_{f:[B]\times\{0,1\}\to[B]}
    \;
    \min
    \left\{
        \sum_{\ell=1}^{L}d_\ell:
        f\text{ has an exact }L\text{-layer realization}
    \right\}.
\]
\end{definition}

\paragraph{Question 1 (logarithmic direct realization).}
For a fixed number of layers $L$, is it true that
$
    D_L(B)=O(\log B)?
$
Equivalently, can every mapping
$f:[B]\times\{0,1\}\to[B]$ be implemented exactly, in one recurrent
update per input symbol and without external memory, using only
$O(\log B)$ recurrent-state coordinates (and hence $O(\log B)$ bits
at fixed precision)?

\paragraph{Question 2 (worst-case state requirement).}
If Question~1 has a negative answer, what is the asymptotically tight,
or ideally exact, growth of $D_L(B)$ as a function of $B$ and $L$?

\begin{definition}[Congruence of a deterministic transition system]
Let $Q$ be a finite state set with transition maps
$f_a:Q\to Q$ indexed by input symbols $a$.  A \emph{congruence} is an
equivalence relation $\sim$ on $Q$ that is preserved by every
transition:
\[
    q\sim q'
    \quad\Longrightarrow\quad
    f_a(q)\sim f_a(q')
    \qquad\text{for every }a.
\]
Equivalently, the transition maps descend to well-defined maps on the
quotient set $Q/{\sim}$.  The equality relation and the universal
relation, in which all states are equivalent, are always congruences.
\end{definition}

\begin{theorem}[Answer to Questions 1 and 2]
For every $B\geq 2$ and every $L\geq 1$,
\[
    D_L(B)=B-1.
\]
Consequently:

\begin{enumerate}[label=(\roman*)]
\item Question 1 has a negative answer, even if the number of layers
      is allowed to depend on $B$;
\item the asymptotically tight worst-case recurrent-state requirement
      is
      $
          \Theta(B)
      $
      bits under any fixed finite-precision convention;
\item over the Boolean field $\mathbb F_2$, where one scalar
      coordinate is one bit, the exact worst-case requirement is
      $B-1$ bits.
\end{enumerate}
\end{theorem}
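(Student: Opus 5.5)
The plan is to prove the two inequalities $D_L(B)\le B-1$ and $D_L(B)\ge B-1$ separately, working over an arbitrary field $\mathbb K$. The upper bound already holds with one layer. The lower bound reduces any triangular stack to a single affine layer by a congruence argument, and then uses a linear-algebra invariance argument for one specific transition map $f$. Items (i)--(iii) then follow: (i) because $B-1$ is not $O(\log B)$ for any $L$, and (ii)--(iii) because at fixed precision each coordinate costs $\Theta(1)$ bits, with exactly one bit per coordinate over $\mathbb F_2$.

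\emph{Upper bound.} Take $L=1$ and $d_1=B-1$, and encode the states as affinely independent points, say $e(1)=0$ and $e(q)=e_{q-1}$ for $2\le q\le B$. An affine map on $\mathbb K^{B-1}$ can be prescribed arbitrarily on an affinely independent set. So for each $a$ there are $A_a,b_a$ with $A_ae(q)+b_a=e(f(q,a))$ for all $q$, which gives $D_L(B)\le B-1$. For $L>1$ I would pad with zero-dimensional layers, or note that the minimum over realizations is only smaller.

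\emph{Reduction to one layer.} Fix a realization of a given $f$. For each $\ell$, let $\sim_\ell$ identify $q,q'$ when $(e_1,\dots,e_\ell)(q)=(e_1,\dots,e_\ell)(q')$. By the triangular form~(4), the updated prefix $(h_1^+,\dots,h_\ell^+)$ is a function of the old prefix and $a$ alone, so each $\sim_\ell$ is a congruence of $f$. I will choose $f$ whose only congruences are equality and the universal relation. Since $e$ is injective, $\sim_L$ is equality. Let $\ell$ be the least index with $\sim_\ell$ equal to equality. Then $\sim_{\ell-1}$ is universal, meaning the lower prefix is the same vector $c$ for every reachable state. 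The coefficients $A_{\ell,a}(c)$ and $b_{\ell,a}(c)$ are therefore constants. Because $\sim_\ell$ is equality while the lower prefix is constant, $e_\ell$ is injective, so layer $\ell$ alone is a one-layer exact realization of $f$ of dimension $d_\ell\le\sum_j d_j$.

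\emph{One-layer lower bound and the choice of $f$.} Let $f_0$ be the cyclic shift $q\mapsto q+1 \bmod B$, and let $f_1$ send $1\mapsto 2$ and fix every other state. (The property that these maps have only trivial congruences is in fact a consequence of the argument below.) Suppose $e:[B]\to\mathbb K^d$ is injective and realizes $f$ by affine maps $\Phi_a$. Let $W=\{\lambda\in\mathbb K^B:\sum_q\lambda_q=0\}$, which has dimension $B-1$, and let $V=\{\lambda\in W:\sum_q\lambda_q e(q)=0\}$, the space of affine dependencies. Then $\dim V\ge B-1-d$. Because $\Phi_a$ is affine and $\sum_q\lambda_q=0$, we get $\sum_q\lambda_q e(f_a(q))=A_a\sum_q\lambda_qe(q)=0$. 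Hence $V$ is invariant under the pushforwards $(f_{a*}\lambda)_{p}=\sum_{f_a(q)=p}\lambda_q$.

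The key step, and the one I expect to need the most care, is to show that $V\ne0$ forces $V=W$. Take $\lambda\in V$ nonzero. After applying a power of the shift $f_{0*}$ we may assume $\lambda_1\ne0$. Then $f_{1*}\lambda-\lambda=\lambda_1(\delta_2-\delta_1)\in V$, so $\delta_2-\delta_1\in V$. Its cyclic shifts $\delta_{i+1}-\delta_i$ then span $W$. But $V=W$ would give $e(1)=e(2)$, contradicting injectivity (this uses $B\ge2$). Therefore $V=0$, so $d\ge B-1$. Combined with the reduction, every realization of this $f$ has $\sum_\ell d_\ell\ge B-1$, which gives $D_L(B)\ge B-1$. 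The same invariance computation, applied to the pairs $q\sim q'$ of a congruence, also confirms that this $f$ has only trivial congruences, as the reduction step required.
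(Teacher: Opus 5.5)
Your proposal is correct and shares the paper's skeleton: the same simplex encoding for the upper bound, and the same reduction through the prefix congruences $\sim_\ell$ to the first layer that separates all states, whose lower prefix is then constant. (For $L>1$, padding with zero-dimensional layers is the justification you want. ``The minimum is only smaller'' does not apply, because $D_L$ fixes $L$.)

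You differ from the paper in the hard automaton and in the one-layer bound. The paper pairs the cyclic shift $c$ with the absorbing successor $s$ and rules out nontrivial congruences by a coset argument. It then bounds the critical layer using only the input-$0$ map: moving the fixed point $e_r(B-1)$ to the origin turns the chain into a nilpotent orbit $y_0,Ay_0,\ldots,A^{B-2}y_0$ with $A^{B-1}y_0=0\neq A^{B-2}y_0$, which gives $B-1$ independent vectors. You pair the shift with the collapse $1\mapsto 2$, and use that the affine dependencies $V$ of the encoded points form a pushforward-invariant subspace of $W$. The shift brings a nonzero coordinate to position $1$, and the collapse isolates $\lambda_1(\delta_2-\delta_1)\in V$. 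That already gives $e(1)=e(2)$, so the detour through $V=W$ is unnecessary.

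Both routes reach the same conclusion, namely that the encoded points are affinely independent. The paper's route is more elementary and needs only one letter at the critical layer; the shift serves only to trivialize the congruence lattice. Yours needs both letters there, but one invariant-subspace lemma does double duty. Apply it to $U_\sim=\mathrm{span}\{\delta_q-\delta_{q'}:q\sim q'\}$ and note that $\delta_2-\delta_1\in U_\sim$ iff $1\sim 2$. This proves the trivial-congruence property that your last sentence only asserts. Your lemma also applies to the paper's automaton, since $c^{B-1}\circ s$ is exactly the collapse $B-1\mapsto B-2$.
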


\paragraph{Intuition for the lower bound.}
The hard transition system combines a cyclic permutation $c$ with a
``successor with an absorbing endpoint'' map $s$.  The cycle forces any
congruence classes to be invariant under cyclic translation, while the
map $s$ destroys every nontrivial translation-invariant partition.
Hence this automaton has only the universal congruence and equality.

In a triangular layered realization, equality of the first $\ell$
layer encodings defines a congruence, because the first $\ell$ updated
layers depend only on the first $\ell$ current layers.  Therefore, at
the first layer that distinguishes any states, all $B$ states must
become distinct at once.  The lower layers are constant there, so that
single layer must realize the entire length-$B$ successor chain by one
fixed affine map.  After translating the absorbing state to the origin,
this produces a nilpotent orbit of length $B-1$, which requires at least
$B-1$ linearly independent vectors and therefore dimension at least
$B-1$.

\begin{proof}
We first prove the upper bound
$
    D_L(B)\leq B-1.
$
It suffices to use one layer.

Index the state set by
$
    \{0,1,\ldots,B-1\}.
$
In $\mathbb K^{B-1}$, define
$
    e(B-1)=0
$
and
\[
    e(i)=v_i,
    \qquad
    0\leq i\leq B-2,
\]
where $v_0,\ldots,v_{B-2}$ are the standard basis vectors.

Fix an arbitrary map
\[
    f:\{0,\ldots,B-1\}\times\{0,1\}
      \longrightarrow
      \{0,\ldots,B-1\}.
\]
For each $a\in\{0,1\}$, set
$
    b_a=e(f(B-1,a))
$
and define the columns of $A_a$ by
\[
    A_av_i=e(f(i,a))-b_a,
    \qquad
    0\leq i\leq B-2.
\]
Then
$
    A_ae(i)+b_a=e(f(i,a))
$
for $0\leq i\leq B-2$, while
\[
    A_ae(B-1)+b_a=b_a=e(f(B-1,a)).
\]
Thus every $f$ has a one-layer affine realization in dimension
$B-1$.

We now prove the matching lower bound.  Identify the state set with
the cyclic group
\[
    \mathbb Z_B=\{0,1,\ldots,B-1\}.
\]
Consider the two transition maps
\[
    s(i)
    =
    \begin{cases}
        i+1,&0\leq i\leq B-2,\\
        B-1,&i=B-1,
    \end{cases}
    \tag{5}
\]
and
\[
    c(i)=i+1\pmod B.
    \tag{6}
\]
Define
\[
    f(i,0)=s(i),
    \qquad
    f(i,1)=c(i).
\]

We claim that the automaton $(\mathbb Z_B,s,c)$ has no congruences
other than equality and the universal relation.  Let $\sim$ be an
equivalence relation satisfying
\[
    i\sim j
    \quad\Longrightarrow\quad
    s(i)\sim s(j)
    \quad\text{and}\quad
    c(i)\sim c(j).
    \tag{7}
\]
Since $c^B$ is the identity, invariance under $c$ implies invariance
under every cyclic translation.  Hence the equivalence classes are
the cosets of the subgroup
$
    H:=\{h\in\mathbb Z_B:0\sim h\}.
$

Suppose that $H$ is nontrivial, and choose
$
    h\in H\setminus\{0\}.
$
Translation invariance gives
\[
    B-1\sim h-1\pmod B.
\]
Applying $s$ and using $h\neq0$, we obtain
\[
    B-1=s(B-1)\sim s(h-1)=h.
\]
Thus
\[
    h-(B-1)=h+1\pmod B
\]
belongs to $H$.  Since $h\in H$ and $H$ is a subgroup, it follows
that $1\in H$.  Therefore $H=\mathbb Z_B$.  Hence every congruence is
either equality or universal.

Now suppose that the pair $(s,c)$ has an $L$-layer realization
\[
    e(i)=\bigl(e_1(i),\ldots,e_L(i)\bigr).
\]
For each $\ell$, define an equivalence relation
\[
    i\equiv_\ell j
    \quad\Longleftrightarrow\quad
    \bigl(e_1(i),\ldots,e_\ell(i)\bigr)
    =
    \bigl(e_1(j),\ldots,e_\ell(j)\bigr).
\]
Because the first $\ell$ updated layers depend only on the first
$\ell$ current layers, $\equiv_\ell$ is a congruence of the automaton:
\[
    i\equiv_\ell j
    \quad\Longrightarrow\quad
    f(i,a)\equiv_\ell f(j,a)
\]
for $a\in\{0,1\}$.

The full encoding is injective, so $\equiv_L$ is equality.  Let $r$ be
the least layer for which $\equiv_r$ is equality.  Since there are no
nontrivial congruences,
$
    \equiv_{r-1}
$
is universal.  Consequently, the lower-layer prefix
$
    \bigl(e_1(i),\ldots,e_{r-1}(i)\bigr)
$
is independent of $i$, while $e_r$ itself is injective.

Under input $0$, the update of layer $r$, restricted to the encoded
states, is therefore one fixed affine map
$
    T(x)=Ax+b
$
satisfying
$
    T(e_r(i))=e_r(s(i)).
$

Let
$
    x_*:=e_r(B-1).
$
Because $s(B-1)=B-1$, we have
$
    T(x_*)=x_*.
$
Set
$
    y_i:=e_r(i)-x_*.
$
Then
\[
    Ay_i=y_{i+1}
    \qquad
    (0\leq i\leq B-2),
\]
where
$
    y_{B-1}=0.
$
In particular,
\[
    A^{B-1}y_0=0,
    \qquad
    A^{B-2}y_0=y_{B-2}\neq0.
\]

The vectors
$
    y_0,Ay_0,\ldots,A^{B-2}y_0
$
are linearly independent.  Indeed, suppose
$
    \sum_{j=0}^{B-2}\alpha_jA^jy_0=0
$
and let $j_0$ be the least index with $\alpha_{j_0}\neq0$.
Applying $A^{B-2-j_0}$ gives
$
    \alpha_{j_0}A^{B-2}y_0=0,
$
because all terms with index larger than $j_0$ acquire exponent at
least $B-1$ and vanish.  This contradicts
$A^{B-2}y_0\neq0$.
Thus layer $r$ has dimension at least $B-1$, and hence
$
    \sum_{\ell=1}^{L}d_\ell\geq B-1.
$
Combining this with the upper bound proves
$
    D_L(B)=B-1.
$

At fixed scalar precision, dimension and recurrent bit complexity
differ by only a constant factor, giving the tight bound
$\Theta(B)$.  Over $\mathbb F_2$, the simplex construction uses
exactly $B-1$ binary coordinates, so the exact bit requirement is
$B-1$.
\end{proof}

\begin{remark}[Why the two results are compatible]
The logarithmic oracle-assisted theorem and the linear direct
realization lower bound answer different questions.

The oracle-assisted simulation is allowed to:
\begin{itemize}[leftmargin=2em]
\item use the unbounded tape as external persistent memory;
\item write the simulated machine's transition table to the tape;
\item take arbitrarily many oracle interactions per simulated
      transition;
\item encode the machine-specific program in the fixed readout
      parameters.
\end{itemize}

By contrast, the direct realization problem requires the mapping
$
    (q,a)\longmapsto f(q,a)
$
to be performed in one recurrent update, without external memory.
For this one-step affine problem, $\Theta(B)$ recurrent state is both
necessary and sufficient.
\end{remark}

\end{document}